\newcommand {\matr}[2]{\left[\begin{array}{#1}#2\end{array}\right]}
\newcommand{\x}{{\mathbf{x}}}
\newcommand{\y}{{\mathbf{y}}}
\newcommand{\z}{{\mathbf{z}}}
\renewcommand{\u}{{\mathbf{u}}}
\newcommand{\vv}{{{v}}}
\newcommand{\w}{{\mathbf{w}}}
\renewcommand{\r}{{\mathbf{r}}}
\newcommand{\rr}{{\mathrm{r}}}
\newcommand{\bxi}{{\boldsymbol{\xi}}}
\newcommand{\bet}{{\boldsymbol{\nu}}}
\newcommand{\bx}{{\x}}
\newcommand{\bu}{{\u}}
\newcommand{\by}{{\tilde\y}}
\newcommand{\blambda}{{\boldsymbol{\lambda}}}
\newcommand{\bmu}{{\boldsymbol{\mu}}}
\definecolor{wheat}{rgb}{0.96,0.87,0.70}
\definecolor{mario}{rgb}{0.8,0.8,1}
\definecolor{ivo}{rgb}{1,0.8,0.8}
\newcommandx{\xb}[2][1=n,2=k]{\x_{#1|#2}}
\newcommandx{\zb}[2][1=n,2=k]{\bar\z_{#1|#2}}
\newcommandx{\ub}[2][1=n,2=k]{\u_{#1|#2}}
\newcommandx{\yb}[2][1=n,2=k]{\bar\y_{#1|#2}}
\newcommandx{\vb}[2][1=n,2=k]{\vv_{#1|#2}}
\newcommandx{\wb}[2][1=n,2=k]{\w_{#1|#2}}
\newcommandx{\rx}{\r^{\x}}
\newcommandx{\ru}{\r^{\u}}
\newcommandx{\tb}[2][1=n,2=k]{\tau_{#1|#2}}
\newcommandx{\xt}[2][1=n,2=k]{\tilde\x_{#1|#2}}
\newcommandx{\ut}[2][1=n,2=k]{\tilde\u_{#1|#2}}
\newcommandx{\vt}[2][1=n,2=k]{\tilde\vv_{#1|#2}}
\newcommandx{\tildet}[2][1=n,2=k]{\tilde\tau_{#1|#2}}
\newcommandx{\rbx}[2][1=n,2=k]{\r_{#1|#2}^{\x}}
\newcommandx{\rbu}[2][1=n,2=k]{\r_{#1|#2}^{\u}}
\newcommandx{\hb}[3][1=n,2=k,3={}]{h_{#1}^{#3}}
\newcommandx{\gb}[3][1=n,2=k,3={}]{g_{#1|#2}^{#3}}
\newcommandx{\gbar}[3][1=n,2=k,3={}]{\bar{g}_{#1|#2}^{#3}}
\newcommandx{\xT}[2][1=n,2=k]{\mathcal{X}_{#1|#2}}
\newcommandx{\sigb}[3][1=n,2=k,3={}]{\bar\sigma_{#1|#2}^{#3}}
\newtheorem{Theorem}{Theorem}
\newtheorem{Lemma}{Lemma}
\newtheorem{Proposition}{Proposition}
\newtheorem{Corollary}[Theorem]{Corollary}
\newtheorem{Assumption}{Assumption}
\newtheorem{Remark}{Remark}
\newcommand{\paolor}[1]{\textcolor{black}{#1}}
\begin{document}
	\title{Model Predictive Control with Infeasible Reference Trajectories}
	\author{Ivo~Batkovic$^{1,2}$,~Mohammad~Ali$^2$,~Paolo~Falcone$^{1,3}$,~and~Mario~Zanon$^4$
		
		\thanks{This work was partially supported by the Wallenberg Artificial Intelligence, Autonomous Systems and Software Program (WASP) funded by Knut and Alice Wallenberg Foundation.
		}
		\thanks{$^{1}$ Ivo Batkovic and Paolo Falcone are with the Mechatronics group at the Department of Electrical Engineering, Chalmers University of Technology, Gothenburg, Sweden {\tt\footnotesize \{ivo.batkovic,falcone\}@chalmers.se }}
		\thanks{$^{2}$ Ivo Batkovic, and Mohammad Ali are with the research department at Zenseact AB {\tt\footnotesize \{ivo.batkovic,mohammad.ali\}@zenseact.com}}
		\thanks{$^{3}$ Paolo Falcone is with the Dipartimento di Ingegneria ``Enzo Ferrari'' Universit\`a di Modena e Reggio Emilia, Italy {\tt\footnotesize falcone@unimore.it}}
		\thanks{$^{4}$ Mario Zanon is with the IMT School for Advanced Studies Lucca {\tt\footnotesize mario.zanon@imtlucca.it}}}

	\maketitle
	
	\begin{abstract}
		Model Predictive Control~(MPC) formulations are typically built on the requirement that a feasible reference trajectory is available. In practical settings, however, references that are infeasible with respect to the system dynamics are used for convenience. In this paper, we prove under which conditions an MPC formulation is Input-to-State Stable~(ISS) in closed-loop when an infeasible reference is used, and that with proper terminal conditions, asymptotic stability towards an optimal reference may be achieved. We illustrate the theoretical results with a four-dimensional robotic joint example.
	\end{abstract}
	
	\begin{IEEEkeywords}
		nonlinear predictive control, trajectory tracking, infeasible references, input to state stability
	\end{IEEEkeywords}
	
	\IEEEpeerreviewmaketitle
	
	\section{Introduction}
	Model Predictive Control~(MPC) is widely known as an advanced control technique for nonlinear systems that can handle time-varying references with preview information as well as constraints. In the vast body of literature, standard MPC formulations penalize deviations from a set point (typically the origin) or a (feasible) reference trajectory, while providing stability and recursive feasibility guarantees for such settings~\cite{Mayne2000,rawlings2009model,borrelli2017predictive,Grune2011}.
	
	In practice, it is often difficult or cumbersome to pre-compute reference trajectories which are feasible, i.e., which both satisfy path constraints and evolve according to the system dynamics. It is therefore appealing to use reference trajectories (or reference paths) that are simple to compute and only satisfy path constraints, but not necessarily the system dynamics. However, using such trajectories introduces difficulties in providing stability guarantees~\cite{rawlings2012fundamentals}.
		
	This paper aims at (partially) filling a gap between practice and theory that exists for MPC formulations with infeasible references. Previous work~\cite{Rawlings2008a} has considered infeasible set points, and shown that stabilization to the closest feasible set point can be guaranteed. However, in the case of time-varying references, this analysis does directly not apply. To that end, we propose to use an Input-to-State Stability~(ISS) approach instead. 
	
	Our contribution in this paper is twofold. First, we prove that MPC formulations that are formulated with an infeasible reference, can actually stabilize towards an optimal trajectory, subject to specific terminal conditions. However, selecting such terminal conditions requires in general that an optimization problem is solved beforehand to compute a feasible reference. Consequently, if this step is performed, one has no reason to provide an infeasible reference to the MPC controller. Therefore, in a second step, we extend this first result to construct an ISS analysis to further show that if sub-optimal terminal conditions are chosen, the controlled system in closed-loop is stabilized around a neighborhood of the optimal trajectory.
		
	This paper is structured as follows. In Section~\ref{sec:mpftc} we introduce the MPC tracking problem, while in Section~\ref{sec:economic_mpc} we prove that, while having an ideal setting in mind,  one can design an MPC formulation that stabilizes the closed-loop system to an optimal feasible reference, even if an infeasible reference is used, at the price of defining suitable terminal conditions. Then, in Section~\ref{sec:iss} we further extend the results by proving ISS for practical settings, i.e., in case the terminal conditions are based on the infeasible reference. Finally, in Section~\ref{sec:simulations} we illustrate the derived theory with a numerical example, and draw conclusions in Section~\ref{sec:conclusions}.

	\section{Preliminaries}\label{sec:mpftc}
	Consider a discrete-time Linear Time-Varying~(LTV) system
	\begin{equation}\label{eq:sys}
		\x_{k+1}=f_k(\x_k,\u_k)=A_k \x_k + B_k \u_k,
	\end{equation}
	where $\x_k\in\mathbb{R}^{n_\x}$ and $\u_k\in\mathbb{R}^{n_\u}$ are the state and input vectors at time $k$, and the matrices $A_k\in\mathbb{R}^{n_\x\times n_\x}$ and $B_k\in\mathbb{R}^{n_\x\times n_\u}$ are time-varying. While we only consider LTV systems in this paper, we comment in Remark~\ref{rem:nl_sys} on how the results could be extended to general nonlinear systems. The state and inputs are subject to constraints $h(\x,\u):\mathbb{R}^{n_\x}\times\mathbb{R}^{n_\u}\rightarrow\mathbb{R}^{n_h}$, where the inequality $h(\x,\u)\leq 0$ is defined element-wise. The constraint $h(\x,\u)$ models, e.g., regions of the state space which should be avoided, and actuator limitations. Our objective is to control the system such that the state $\x_k$ tracks a user-provided parameterized reference trajectory $\r(t)=(\r^\x(t),\r^\u(t))$ as closely as possible. We assume that the reference trajectory is parameterized with time parameter $t$, with natural dynamics
	\begin{equation}
	\label{eq:tau_controlled}
	t_{k+1} = t_k + t_\mathrm{s},
	\end{equation}
	where $t_\mathrm{s}=1$ for discrete-time systems, or the sampling time for sampled-data systems. Throughout the remainder of the paper, we will refer to any time dependence of the reference using notation $(\r^\x_k,\r^\u_k):=(\r^\x(t_k),\r^\u(t_k))$.
	
	In order to track the reference $(\r^\x_k,\r^\u_k)$, we formulate the tracking MPC problem as 
		\begin{subequations}
		\label{eq:nmpc}
		\begin{align}
			\begin{split}\hspace{-0.5em}V(\x_k,t_k):=&\min_{\substack{\bx},\substack{\bu}} \sum_{n=k}^{k+N-1}
				q_\r(\xb,\ub,t_n)\\
				&\hspace{3.5em}+p_\r(\xb[k+N],t_{k+N})\hspace{-2em}
			\end{split}\label{eq:nmpc_cost}\\
			\text{s.t.}\ &\xb[k][k] = \x_{k},\label{eq:nmpcState} \\
			&\xb[n+1] = f_n(\xb,\ub),\label{eq:nmpcDynamics} & \hspace{-1em}n\in \mathbb{I}_k^{k+N-1},\\
			&h(\xb,\ub) \leq{} 0, \label{eq:nmpcInequality_known}& \hspace{-1em}n\in \mathbb{I}_k^{k+N-1},\\
			&\xb[k+N] \in\mathcal{X}^\mathrm{f}_\r(t_{k+N})\label{eq:nmpcTerminal},
		\end{align}
	\end{subequations}
	where $k$ is the current time instance and $N$ is the prediction horizon. In tracking MPC, typical choices for the stage and terminal costs are
	\begin{align}
	q_\r(\xb,\ub,t_n) &:= \matr{c}{\xb-\rx_n\\\ub-\ru_n}^\top{}\hspace{-0.7em}W\matr{c}{\xb-\rx_n\\\ub-\ru_n},\label{eq:stage_cost}\\
	p_\r(\xb,t_n) &:= (\xb-\rx_{n})^\top{}P(\xb-\rx_{n}),\label{eq:terminal_cost}
	\end{align}
	where $W\in\mathbb{R}^{(n_\x+n_\u) \times (n_\x+n_\u)}$ and $P\in\mathbb{R}^{n_\x\times n_\x}$ are symmetric positive-definite matrices. In order to avoid further technicalities, we avoid more general costs for the sake of simplicity. The predicted states and controls at the prediction time $n$ given the states at the current time $k$, are defined as $\xb$, and $\ub$, respectively. 
	The initial condition is enforced by constraint \eqref{eq:nmpcState}, and constraint \eqref{eq:nmpcDynamics} enforces the system dynamics. Constraint \eqref{eq:nmpcInequality_known} denotes constraints, e.g., state and control limits, and constraint \eqref{eq:nmpcTerminal} defines a terminal set containing the reference $\r$. Note that, differently from standard formulations, the terminal constraint depends on the time parameter $t_{k+N}$ relative to the reference.
	
	In the following, we first recall the standard stability properties of tracking MPC. Then, in Sections~\ref{sec:economic_mpc} and~\ref{sec:iss} we will derive  
	Input-to-State Stability~(ISS) results when the parameterized reference trajectory is not feasible with respect to the system dynamics.

	In order to prove stability, we introduce the following standard assumptions, see, e.g.,~\cite{rawlings2009model,Grune2011}. 
	\begin{Assumption}[System and cost regularity]\label{a:cont}
		
			The system model $f$ is continuous, and the stage cost $q_\r:\mathbb{R}^{n_\x}\times\mathbb{R}^{n_\u}\times\mathbb{R}\rightarrow\mathbb{R}_{\geq{}0}$, and terminal cost $p_\r:\mathbb{R}^{n_\x}\times\mathbb{R}\rightarrow\mathbb{R}_{\geq{}0}$, are continuous at the origin and satisfy $q_\r(\rx_k,\ru_k,t_k)=0$, and $p_\r(\rx_k,t_k)=0$. Additionally, $q_\r(\bx_k,\bu_k,t_k)\geq{}\alpha_1(\|\bx_k-\rx_k\|)$ for all feasible $\x_k$, $\u_k$, and  $p_\r(\bx_k,t_k)\leq\alpha_2(\|\bx_k-\rx_k\|)$, where $\alpha_1$ and $\alpha_2$ are $\mathcal{K}_\infty$-functions.
		
	\end{Assumption}
	
	\begin{Assumption}[Reference feasibility] \label{a:rec_ref}

			The reference trajectory satisfies the system dynamics~\eqref{eq:nmpcDynamics} and the system constraints~\eqref{eq:nmpcInequality_known}, i.e., $\r^\x_{k+1}=f_k(\r^\x_k,\r^\u_k)$ and $h(\r^\x_k,\r^\u_k) \leq{} 0$, $\forall{}k\in\mathbb{I}_0^\infty$.
			
	\end{Assumption}
	\begin{Assumption}[Stabilizing Terminal Conditions] \label{a:terminal}
		There exists a parametric stabilizing terminal set  $\mathcal{X}^\mathrm{f}_\r(t)$ and a terminal control law $\kappa^\mathrm{f}_\r(\x,t)$ yielding:
		\begin{align*}
		\mathbf{x}_{+}^\kappa=f_k(\mathbf{x}_k,\kappa^\mathrm{f}_\r(\x_k,t)), && t_+ = t_k + t_\mathrm{s},
		\end{align*}
		such that
		$p_\r(\x_{+}^\kappa,t_{+}) - p_\r(\x_k,t_k) \leq{} - q_\r(\x_k,\kappa^\mathrm{f}_\r(\x_k,t_k),t_k)$,
		$\x_k\in\mathcal{X}^\mathrm{f}_\r(t_k)\Rightarrow \x^\kappa_{+}\in\mathcal{X}^\mathrm{f}_\r(t_{+})$, and $h(\x_k,\kappa^\mathrm{f}_\r(\x_k,t_k)) \leq{} 0$ hold for all $k\in\mathbb{I}_0^\infty$.

	\end{Assumption}	

\begin{Proposition} [Nominal Asymptotic Stability]\label{prop:stab_feas}
	{Suppose that Assumptions \ref{a:cont}, \ref{a:rec_ref}, and \ref{a:terminal}  hold, 
		and that the initial state $(\x_k,t_k)$ at time $k$ belongs to the feasible set of Problem \eqref{eq:nmpc}. Then the system \eqref{eq:sys} in closed-loop with the solution of~\eqref{eq:nmpc} applied in receding horizon is an asymptotically stable system.} \label{prop:stable}
	\begin{proof}
		See the standard proof in, e.g., \cite{rawlings2009model,borrelli2017predictive}.
	\end{proof}
\end{Proposition}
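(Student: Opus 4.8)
The plan is to follow the classical Lyapunov argument for tracking MPC, using the optimal value function $V(\x_k,t_k)$ as a (time-varying) Lyapunov function for the tracking error $e_k := \x_k - \r^\x_k$. First I would establish the standard sandwich bounds: lower-bounding $V(\x_k,t_k) \geq q_\r(\x_k,\u_k^\star,t_k) \geq \alpha_1(\|\x_k-\r^\x_k\|)$ using the stage-cost bound in Assumption~\ref{a:cont}, and upper-bounding $V(\x_k,t_k) \leq \alpha_3(\|\x_k-\r^\x_k\|)$ on a neighbourhood of the reference by constructing a feasible (generally suboptimal) candidate input sequence — e.g. the one generated by iterating the terminal control law $\kappa^\mathrm{f}_\r$ from $\x_k$ — whose cost is bounded by a $\mathcal{K}_\infty$-function of the initial error; here Assumptions~\ref{a:rec_ref} and~\ref{a:terminal} guarantee this candidate is feasible (satisfies dynamics, path constraints, and the terminal constraint).

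The core step is the descent inequality. Given the optimal sequence $(\xb[\cdot][k], \ub[\cdot][k])$ at time $k$, I would build the standard shifted candidate at time $k+1$: keep the tail $\ub[n][k]$ for $n = k+1,\dots,k+N-1$ and append the terminal control $\kappa^\mathrm{f}_\r(\xb[k+N][k], t_{k+N})$ as the $N$-th input. By Assumption~\ref{a:rec_ref} the reference is consistent with the dynamics so the new state trajectory is well defined, by Assumption~\ref{a:terminal} the appended step keeps the terminal state in $\mathcal{X}^\mathrm{f}_\r(t_{k+N+1})$ and satisfies the path constraint, so this candidate is feasible for Problem~\eqref{eq:nmpc} at $(\x_{k+1},t_{k+1})$. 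Evaluating its cost and using the terminal decrease condition $p_\r(\x_+^\kappa,t_+) - p_\r(\x_k,t_k) \leq -q_\r(\x_k,\kappa^\mathrm{f}_\r(\x_k,t_k),t_k)$ to telescope, I get
\begin{equation*}
V(\x_{k+1},t_{k+1}) - V(\x_k,t_k) \leq -q_\r(\x_k,\u_k^\star,t_k) \leq -\alpha_1(\|\x_k-\r^\x_k\|).
\end{equation*}
Combined with the sandwich bounds, this makes $V$ a Lyapunov function for the error dynamics, and recursive feasibility is a by-product of the candidate construction.

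Finally I would invoke the standard time-varying Lyapunov stability theorem (e.g.\ from \cite{rawlings2009model,Grune2011}) to conclude asymptotic stability of the closed-loop system relative to the reference trajectory $\r^\x$. The main obstacle — really the only non-boilerplate point — is the upper bound $V(\x_k,t_k) \leq \alpha_3(\|\x_k - \r^\x_k\|)$, since it requires that the terminal set $\mathcal{X}^\mathrm{f}_\r(t)$ contain a neighbourhood of the reference uniformly in $t$ (or that the horizon be long enough to reach it), so that the terminal-controller candidate is admissible for all small initial errors; one also needs the bounds to be uniform in $t_k$ so that the Lyapunov argument is genuinely time-independent. Since the proposition defers to the textbook proof, I would simply note these points and cite \cite{rawlings2009model,borrelli2017predictive} for the details.
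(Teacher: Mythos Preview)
Your proposal is correct and is precisely the standard argument the paper defers to by citing \cite{rawlings2009model,borrelli2017predictive}: recursive feasibility via the shifted-and-appended candidate, the terminal decrease condition from Assumption~\ref{a:terminal} to obtain $V(\x_{k+1},t_{k+1})-V(\x_k,t_k)\leq -\alpha_1(\|\x_k-\r^\x_k\|)$, and the $\mathcal{K}_\infty$ sandwich bounds on $V$ from Assumption~\ref{a:cont}. The caveats you flag about the uniform-in-$t_k$ upper bound and the terminal set containing a neighbourhood of the reference are exactly the technical points handled in those textbook references.
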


	Proposition~\ref{prop:stab_feas} recalls the known stability results from the existing literature, which apply to tracking MPC schemes. The resulting design procedure for asymptotically stable tracking MPC is indeed complicated by the task of precomputing a feasible reference trajectory $(\r^\x_k,\r^\u_k)$ that satisfies Assumption~\ref{a:rec_ref}. However, in practice, it may be convenient to use a reference trajectory that is infeasible w.r.t. the system dynamics, yet simpler to define. While in standard MPC settings the stability with respect to an unreachable set point has been studied in~\cite{Rawlings2008a}, the approach therein applies to time-invariant infeasible references. In order to overcome such a limitation, we consider a setting where the reference can be time-varying and does not need to satisfy Assumption~\ref{a:rec_ref}, and the terminal conditions \eqref{eq:nmpcTerminal} do not need to hold at the reference trajectory, but in a neighborhood. While the results proposed in this paper are developed for a standard MPC formulation, we point out that they hold in other settings as well, including Model Predictive path Following Control~(MPFC)~\cite{Faulwasser2016} or Model Predictive Flexible trajectory Tracking Control~(MPFTC)~\cite{batkovic2020safe}.

	\section{Optimal Feasible Reference}
	Consider the optimal state and input trajectories obtained as the solution of the optimal control problem (OCP)
	\begin{subequations}
		\label{eq:ocp}
		\begin{align}\begin{split}
		\hspace{-1em}(\bx^\rr,\bu^\rr)\hspace{-.25em}:=\hspace{-.5em}\lim_{M\rightarrow\infty}\hspace{-0.3em}
		\arg&\min_{\bxi,\bet} \sum_{n=0}^{M-1}
		\hspace{-.3em}q_\r(\bxi_n,\bet_n,t_n)\hspace{-0.1em}+\hspace{-0.1em}p_\r(\bxi_M,t_M)\label{eq:ocp_cost} \hspace{-20em}\end{split}\\
		\text{s.t.}\ &\bxi_0=\x_{0}, \label{eq:ocpState} &\\
		&\bxi_{n+1} = f_n(\bxi_n,\bet_n),\label{eq:ocpDynamics} & n\in \mathbb{I}_{0}^{M-1},\\
		&h(\bxi_n,\bet_n) \leq{} 0, \label{eq:ocpInequality_known}& \hspace{-1em}n\in \mathbb{I}_0^{M-1},
		\end{align}
	\end{subequations}
	with the corresponding value function
	\begin{equation}
		V^\mathrm{O}(\x_k,t_k) := \lim_{M\rightarrow\infty}\sum_{n=0}^{M-1} q_\r(\x^\rr_n,\u_n^\rr,t_n)+p_\r(\x_M^\rr,t_M)
	\end{equation}
	The terminal cost in~\eqref{eq:ocp_cost} and initial state constraint~\eqref{eq:ocpState} can in principle be omitted or formulated otherwise, e.g., the terminal cost can be replaced with a terminal constraint instead,  but we include them in the formulation since they are often taking this form. We use here the same stage cost as in~\eqref{eq:nmpc_cost} and assume it is positive-definite. We exclude positive semi-definite costs solely for the sake of simplicity.
	We define the Lagrangian of the OCP~\eqref{eq:ocp} as
	\begin{align*}
	\mathcal{L}^\mathrm{O}(\bxi, \bet, \blambda,\bmu,\mathbf{t}) &= \blambda_0^\top (\bxi_0 - \x_{0}) +p_\r(\bxi_M,t_M)\\
	&+\lim_{M\rightarrow\infty}\sum_{n=0}^{M-1}
	q_\r(\bxi_n,\bet_n,t_n) +\bmu_n^\top h(\bxi_n,\bet_n)\\
	&+\lim_{M\rightarrow\infty}\sum_{n=0}^{M-1}  \blambda_{n+1}^\top (\bxi_{n+1} - f_n(\bxi_n,\bet_n)),
	\end{align*}
	and denote the optimal multipliers as $\blambda^\rr,\bmu^\rr$, and the solution of~\eqref{eq:ocp} as $\y^\rr:=(\x^\rr,\u^\rr)$. 
	Hereafter, we will refer to the reference $\y^\rr$ as the \emph{feasible reference}, as it satisfies Assumption~\ref{a:rec_ref}.
	
	\begin{Remark}
		Note that Problem~\eqref{eq:ocp} is formulated as an infinite horizon OCP since a reference could be defined over an infinite time horizon. For instance, a stationary reference can be viewed as being infinitely long as it remains at the same point at all times.
	\end{Remark}
	
	In the following, we will prove the stability of \eqref{eq:sys} w.r.t. $\y^\rr$ by relying on the trajectories~$\y^\rr$ and $\blambda^\rr$ from~\eqref{eq:ocp}, where $\y^\rr$ is used as an auxiliary reference.
	Our analysis will proceed as follows. We will first discuss an ideal case in which the terminal conditions are constructed based on $\y^{\mathrm{r}}$. By exploiting ideas from economic MPC we will prove that asymptotic stability can be obtained in that case. Since our objective is to avoid using any information on $\y^{\mathrm{r}}$, we will then turn to the realistic MPC formulation~\eqref{eq:nmpc}, and we will prove ISS.

	\subsection{Ideal MPC and Asymptotic Stability}	\label{sec:economic_mpc}
	
	Our analysis builds on tools that are used in the stability analysis of economic MPC schemes. The interested reader is referred to the following most relevant publications related to our analysis~\cite{Diehl2011,Amrit2011a,Zanon2018a,Faulwasser2018}. 
	
	Economic and tracking MPC schemes differ in the cost function, which satisfies
	\begin{align}
		\begin{split}
		\label{eq:tracking_cost}
		q_\r(\x^\rr_k,\u^\rr_k,t_k) =0,\ &q_\r(\x_k,\u_k,t_k) >0,\\
		&\forall \ \x_k\neq{}\bx^{\mathrm{r}}_k,\ \u_k\neq\bu^{\mathrm{r}}_k,
		\end{split}
	\end{align}
		in tracking schemes but not in economic ones. Note that~\eqref{eq:tracking_cost} can only hold if $\r=\y^\rr$, that is, if Assumption~\ref{a:rec_ref} holds.
	Consequently, even if the cost is positive-definite, any MPC scheme formulated with an infeasible reference $\r$ is an economic MPC. 
	We refer to~\cite{Zanon2018a,Faulwasser2018} for a detailed discussion on the topic.
	On the contrary, if $\y^\rr$ is used as reference, we obtain the tracking stage cost $q_{\y^\rr}$. Since precomputing a feasible reference $\y^\rr$ can be impractical or involved, we focus next on the case of \emph{infeasible references}.

	Consider the Lagrangian of the OCP~\eqref{eq:ocp}
	\begin{align*}
	\mathcal{L}^\mathrm{O}(\bxi, \bet, \blambda,\bmu,\mathbf{t}) &= \blambda_0^\top (\bxi_0 - \x_{0}) +p_\r(\bxi_M,t_M)\\
	&+\lim_{M\rightarrow\infty}\sum_{n=0}^{M-1}
	q_\r(\bxi_n,\bet_n,t_n) +\bmu_n^\top h(\bxi_n,\bet_n)\\
	&+\lim_{M\rightarrow\infty}\sum_{n=0}^{M-1}  \blambda_{n+1}^\top (\bxi_{n+1} - f_n(\bxi_n,\bet_n)),
	\end{align*}
	and denote the optimal multipliers as $\blambda^\mathrm{r},\bmu^\mathrm{r}$, and the solution of~\eqref{eq:ocp} as $\y^\rr:=(\x^\rr,\u^\rr)$.	In order to construct a tracking cost from the economic one, we use the Lagrange multipliers of the OCP~\eqref{eq:ocp} to construct a \emph{rotated} problem, which has the same constraints as the original MPC problem~\eqref{eq:nmpc} and the following \emph{rotated stage and terminal costs} 
		\begin{align*}
	&\bar q_\r(\xb,\ub,t_n):=q_\r(\xb,\ub,t_n)-q_\r(\x^\rr_n,\u^\rr_n,t_n)\\
	&\hspace{1em}+ \blambda_n^{\mathrm{r}\top}(\xb[n][k]-\x^\rr_n)- \blambda^{\rr\top}_{n+1} (f_n(\xb[n][k],\ub[n][k])-f_n(\x^\rr_n,\u^\rr_n)), \\
	&\bar{p}_\r(\xb,t_n):= p_\r(\xb)-p_\r(\x_{n}^\rr,t_n)+\blambda^{\rr\top}_{n}(\xb-\x^\rr_{n}).
	\end{align*}

As we prove in the following Lemma~\ref{lem:rot_ocp}, adopting the rotated stage cost $\bar q_\r$ and terminal cost $\bar p_\r$ in the OCP~\eqref{eq:ocp} does not change its primal solution. Such property of the rotated costs will be exploited next in the formulation of the \emph{ideal} MPC problem.
	\begin{Lemma}
		\label{lem:rot_ocp}
		If OCP~\eqref{eq:ocp} is formulated using the rotated cost instead of the original one, then the Second Order Sufficient optimality Conditions (SOSC) are satisfied~\cite{Nocedal2006}, and the following claims hold:
		\begin{enumerate}
		    \item[i)] the primal solution is unchanged;
		    \item[ii)] the rotated cost penalizes deviations from the optimal solution of Problem~\eqref{eq:ocp}, i.e.,
		    \begin{align*}
		        \bar q_\r(\x_n^\rr,\u_n^\rr,t_n) =0,\ \bar q_\r(\x_n,\u_n,t_n)>0,
		    \end{align*}
		    for all $(\x_n,\u_n) \neq (\bx_n^\rr,\bu_n^\rr)$ satisfying $h(\x_n,\u_n) \leq 0$.
		\end{enumerate}
	\end{Lemma}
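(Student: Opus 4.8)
The plan is to recognize this as a standard \emph{cost rotation} argument from economic MPC (in the spirit of Diehl--Amrit--Rawlings and Zanon--Gros), and to exploit the fact that the rotation terms telescope along any dynamically feasible trajectory. First I would observe that for any trajectory $(\bxi,\bet)$ satisfying the dynamics~\eqref{eq:ocpDynamics} and the initial constraint~\eqref{eq:ocpState}, the terms $\blambda_n^{\rr\top}(\bxi_n-\x^\rr_n)$ and $-\blambda^{\rr\top}_{n+1}(f_n(\bxi_n,\bet_n)-f_n(\x^\rr_n,\x^\rr_n))$ appearing in $\bar q_\r$, together with the term $\blambda^{\rr\top}_M(\bxi_M-\x^\rr_M)$ in $\bar p_\r$, form a telescoping sum that collapses to $\blambda_0^{\rr\top}(\bxi_0-\x^\rr_0)=\blambda_0^{\rr\top}(\x_0-\x_0)=0$ (using $\bxi_{n+1}=f_n(\bxi_n,\bet_n)$ and $\x^\rr_{n+1}=f_n(\x^\rr_n,\x^\rr_n)$, and noting the finite-$M$ truncation error vanishes as $M\to\infty$ because the trajectories converge). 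Likewise the constant subtractions $-q_\r(\x^\rr_n,\u^\rr_n,t_n)$ and $-p_\r(\x^\rr_M,t_M)$ sum to the fixed number $-V^{\mathrm{O}}(\x_0,t_0)$. Hence the rotated objective equals the original objective minus a constant, on the whole feasible set; this immediately gives claim i), since shifting an objective by a constant changes neither the argmin nor the set of KKT points.

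Second, for the SOSC claim I would note that the rotation is, by construction, a first-order modification: it adds to the Lagrangian only terms that are \emph{linear} in the primal variables (the multiplier-weighted linearizations), so the Hessian of the rotated Lagrangian at the optimal primal-dual point coincides with the Hessian of the original Lagrangian there. Since the original OCP with a positive-definite stage cost $q_\r$ satisfies SOSC (the Hessian is positive definite on the critical cone — indeed it is positive definite outright because $W\succ0$ and $P\succ0$), so does the rotated one. I would state this explicitly using $W\succ0$, $P\succ0$ from the choice~\eqref{eq:stage_cost}--\eqref{eq:terminal_cost}, and cite \cite{Nocedal2006} for the SOSC characterization.

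Third, for claim ii), that $\bar q_\r\ge 0$ with equality only at $(\x^\rr_n,\u^\rr_n)$, I would argue pointwise in $n$ rather than along a trajectory. Fix $n$ and consider $\bar q_\r$ as a function of a single $(\x,\u)$ with $h(\x,\u)\le 0$. It is a quadratic in $(\x,\u)$ whose quadratic part is exactly $q_\r$'s, i.e. $\begin{bmatrix}\x-\rx_n\\\u-\ru_n\end{bmatrix}^\top W\begin{bmatrix}\x-\rx_n\\\u-\ru_n\end{bmatrix}$ up to the affine rotation terms — more precisely, $\bar q_\r(\x,\u,t_n)$ is a strictly convex quadratic in $(\x,\u)$ (Hessian $2W\succ0$) that vanishes at $(\x^\rr_n,\u^\rr_n)$. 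To conclude nonnegativity on the feasible set I would invoke stationarity: the partial-KKT / stationarity conditions of the OCP at stage $n$ (the stationarity of $\mathcal{L}^{\mathrm{O}}$ w.r.t. $\bxi_n$ and $\bet_n$, together with complementarity $\bmu_n^{\rr\top}h(\x^\rr_n,\u^\rr_n)=0$ and $\bmu^\rr_n\ge0$) say precisely that the gradient of $\bar q_\r$ at $(\x^\rr_n,\u^\rr_n)$ equals $-\nabla(\bmu_n^{\rr\top}h)$, so that for any feasible $(\x,\u)$, first-order convexity of the quadratic $\bar q_\r$ gives $\bar q_\r(\x,\u,t_n)\ge \nabla\bar q_\r(\x^\rr_n,\u^\rr_n,t_n)^\top\!\begin{bmatrix}\x-\x^\rr_n\\\u-\u^\rr_n\end{bmatrix} = -\bmu_n^{\rr\top}(h(\x,\u)-h(\x^\rr_n,\u^\rr_n)) = -\bmu_n^{\rr\top}h(\x,\u)\ge 0$, with the last inequality from $\bmu^\rr_n\ge0$, $h(\x,\u)\le0$. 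Strict positivity away from $(\x^\rr_n,\u^\rr_n)$ then follows from strict convexity (Hessian $2W\succ0$).

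The main obstacle I expect is the careful handling of the infinite horizon: making the telescoping argument rigorous requires justifying that the finite-horizon residual terms (the last $\blambda$-coupling term and the gap between $\bxi_M$ and $\x^\rr_M$) vanish in the limit $M\to\infty$, which in turn needs the tail of the cost to be summable and the compared trajectories to converge — facts that hold because $q_\r$ is positive definite and the OCP value is finite, but which should be stated as a standing assumption or lemma. A secondary subtlety is that the multipliers $\blambda^\rr$ must be well-defined in the limit (existence of a convergent infinite-horizon dual sequence); I would either assume this directly or restrict attention to the finite-$M$ problems and pass to the limit at the end. Everything else — SOSC, the pointwise positivity via convexity and stationarity — is routine once these limiting issues are dispatched.
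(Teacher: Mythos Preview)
Your plan is essentially the paper's argument. For claim~i) the paper uses exactly the telescoping you describe, leaving only the constant $\blambda_0^{\rr\top}(\bxi_0-\x_0^\rr)$; for SOSC it argues identically that the rotation is first-order so $\nabla^2_n\bar q_\r=\nabla^2_n q_\r\succ0$ (this is in fact the only place the LTV assumption is used, cf.\ Remark~\ref{rem:nl_sys}). For claim~ii) the paper is terser than you: it writes the KKT conditions of the \emph{rotated} OCP, observes they are solved by the same primal with $\bar\blambda_n=0$, $\bar\bmu_n=\bmu_n^\rr$, and then concludes positivity of $\bar q_\r$ on $\{h\le0\}$ ``directly'' from SOSC and strict convexity; your explicit convexity-plus-complementarity argument is precisely the unpacking of that step. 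The paper does not address the infinite-horizon tail issues you flag.

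One slip to fix in your claim~ii) chain: the displayed equality
\[
\nabla\bar q_\r(\x^\rr_n,\u^\rr_n,t_n)^\top\!\begin{bmatrix}\x-\x^\rr_n\\\u-\u^\rr_n\end{bmatrix}
\;=\;-\bmu_n^{\rr\top}\bigl(h(\x,\u)-h(\x^\rr_n,\u^\rr_n)\bigr)
\]
holds only when $h$ is affine; in general the left-hand side is $-\bmu_n^{\rr\top}\nabla h(\x^\rr_n,\u^\rr_n)\bigl[\begin{smallmatrix}\x-\x^\rr_n\\\u-\u^\rr_n\end{smallmatrix}\bigr]$. Your inequality chain still closes if $h$ is convex (the linearization lower-bounds $h$, and since $\bmu^\rr_n\ge0$ the sign works out), but for general $h$ neither your argument nor the paper's SOSC sketch yields \emph{global} positivity on the feasible set---SOSC is local. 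You are not missing anything the paper supplies; just replace the equality by the linearization and, for full rigor, either assume $h$ convex or state the conclusion locally.
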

	\begin{proof}
		First, we prove that if Problem~\eqref{eq:ocp} is formulated using stage cost $\bar q_\r$ and terminal cost $\bar p_\r$ instead of $q_\r$ and $p_\r$, the primal solution remains unchanged. 
		This is a known result from the literature on economic MPC and is based on the observation that all terms involving $\blambda^\mathrm{r}$ in the rotated cost form a telescopic sum and cancel out, such that only ${\blambda_0^\mathrm{r}}^\top (\bxi_0-\x_0^\mathrm{r})$ remains. Since the initial state is fixed, the cost only differs by a constant term and the primal solution is unchanged. The cost $\bar q_\r$ being nonnegative is a consequence of the fact that the stage cost Hessian is positive definite by Assumption \ref{a:cont}, the system dynamics are LTV, and the Lagrange multipliers $\bar \blambda$ associated with Problem~\eqref{eq:ocp} using cost $\bar q_\r$ are $0$. 
		
		To prove the second claim, we define the Lagrangian of the rotated problem as 
		\begin{align*}
		\mathcal{\bar L}^\mathrm{O}(\bxi, \bet, \bar \blambda,\bar \bmu,\mathbf{t}) 
		= \ & \bar{\blambda}_0^\top (\bxi_0 - \x_{0}) + \bar p_\r (\bxi_M,t_M)\\
		&\hspace{-2em}+\lim_{M\rightarrow\infty}\sum_{n=0}^{M-1}
		\bar{q}_\mathrm{\r}(\bxi_n,\bet_n,t_n) + \bar \bmu_n^\top h(\bxi_n,\bet_n)\\
		&\hspace{-2em}+\lim_{M\rightarrow\infty}\sum_{n=0}^{M-1} \bar{\blambda}_{n+1}^\top ( \bxi_{n+1} - f_n(\bxi_n,\bet_n) ).
		\end{align*}
		For compactness we denote next $\nabla_n:=\nabla_{(\bxi_n,\bet_n)}$. Since by construction $\nabla_n \bar q_\mathrm{\r}=\nabla_n \mathcal{L}^\mathrm{O} - \nabla_n \bmu_n^{\rr\top} h $, we obtain
		\begin{align*}
		\nabla_n \mathcal{\bar L}^\mathrm{O} &= \nabla_n \bar q_\mathrm{\r} + \matr{c}{\bar \blambda_n \\ 0} - \nabla_n \bar \blambda_{n+1}^\top f_n  + \nabla_n \bar \bmu_n^\top h \\
		&\hspace{-1.2em}= \nabla_n \mathcal{L}^\mathrm{O} + \matr{c}{\bar \blambda_{n} \\ 0} - \nabla_n \bar \blambda_{n+1}^\top f_n + \nabla_n (\bar \bmu_n-\bmu_n^\mathrm{r})^\top h.
		\end{align*}
		Therefore, the KKT conditions of the rotated problem are solved by the same primal variables as the original problem and $\bar \bmu_n = \bmu_n^\mathrm{r}$, $\bar \blambda_n=0$. With similar steps we show that $\bar{\lambda}_P=0$, since $\nabla_M\bar{p}_\r=\nabla_M \mathcal{L}^\mathrm{O}$.
		Because the system dynamics are LTV and the stage cost quadratic, we have that 
		$\nabla^2_n \bar q_\mathrm{\r} = \nabla^2_n q_\mathrm{\r}\succ0$
		Moreover, since the solution satisfies the SOSC,
		we directly have that $\bar q_\mathrm{\r}(\x_n^\mathrm{r},\u_n^\mathrm{r},t_n) =0$ and $\bar q_\mathrm{\r}(\bxi_n,\bet_n,t_n) > 0$ for all $(\bxi_n,\bet_n)\neq(\x_n^\mathrm{r},\u_n^\mathrm{r})$ s.t. $h(\bxi_n,\bet_n) \leq 0$.
	\end{proof}
	\begin{Remark}
		\label{rem:nl_sys}
		The only reason that limits our result to LTV systems is that this entails $\nabla^2_n \bar q_\mathrm{\r} = \nabla^2_n q_\mathrm{\r}\succ0$. It seems plausible that this limitation could be overcome by assuming that OCP~\eqref{eq:ocp} satisfies the SOSC for all initial states at all times. However, because further technicalities would be necessary to obtain the proof, we leave this investigation for future research.
	\end{Remark}
	\begin{Corollary} The rotated value function of OCP~\eqref{eq:ocp}, i.e.,
	    \begin{align*}
		    \bar V^\mathrm{O}(\x_k,t_k) &=\ V^\mathrm{O}(\x_k,t_k) + \blambda^{\rr^\top}_k (\x_k-\x^\rr)\\
		    &-\lim_{M\rightarrow\infty}\sum_{n=k}^{k+M-1}q_\r(\x^\rr_n,\u^\rr_n,t_n)-p_\r(\x^\rr_{k+M},t_M),
	    \end{align*}
		is positive definite, and its minimum is $\bar V^\mathrm{O}(\bx_k^\mathrm{r},t_k)=0$.
	\end{Corollary}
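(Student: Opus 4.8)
\emph{Proof strategy.} The plan is to recognise $\bar V^\mathrm{O}$ as the optimal value of the \emph{rotated} OCP — which is the content of the displayed identity — and then read off both claims from Lemma~\ref{lem:rot_ocp}. First I would apply Lemma~\ref{lem:rot_ocp} to OCP~\eqref{eq:ocp} started at the generic initial datum $(\x_k,t_k)$ (its proof is insensitive to the initial state): by part~(i) the rotated OCP and the original one share the primal minimiser, which I denote $(\x^\star_n,\u^\star_n)_{n\ge k}$ with $\x^\star_k=\x_k$, so that $\bar V^\mathrm{O}(\x_k,t_k)=\lim_{M\to\infty}\big(\sum_{n=k}^{k+M-1}\bar q_\r(\x^\star_n,\u^\star_n,t_n)+\bar p_\r(\x^\star_{k+M},t_{k+M})\big)$. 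Substituting the definitions of $\bar q_\r$ and $\bar p_\r$ and using $\x^\star_{n+1}=f_n(\x^\star_n,\u^\star_n)$ together with $\x^\rr_{n+1}=f_n(\x^\rr_n,\u^\rr_n)$, every term carrying $\blambda^\rr$ forms a telescoping sum that collapses to the single boundary term $\blambda_k^{\rr\top}(\x_k-\x^\rr_k)$ — the term $-\blambda_{k+M}^{\rr\top}(\x^\star_{k+M}-\x^\rr_{k+M})$ coming from $\sum\bar q_\r$ cancels the one coming from $\bar p_\r$ — leaving precisely the displayed formula, with the first $M$-limit equal to $V^\mathrm{O}(\x_k,t_k)$ and the second equal to the cost accrued along the reference. (For an infeasible $\r$ these two $M$-limits may individually diverge, so strictly speaking the well-defined object is the limit of their difference, which is $\bar V^\mathrm{O}$.)

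The minimum is then obtained by evaluating at $\x_k=\x^\rr_k$: by the principle of optimality the minimiser of OCP~\eqref{eq:ocp} from $(\x^\rr_k,t_k)$ is the tail of the reference $\y^\rr$ from time $k$ onward, and along it $\bar q_\r(\x^\rr_n,\u^\rr_n,t_n)=0$ by Lemma~\ref{lem:rot_ocp}(ii), while $\bar p_\r(\x^\rr_n,t_n)=p_\r(\x^\rr_n,t_n)-p_\r(\x^\rr_n,t_n)+\blambda_n^{\rr\top}(\x^\rr_n-\x^\rr_n)=0$ directly from the definition of the rotated terminal cost; hence $\bar V^\mathrm{O}(\x^\rr_k,t_k)=0$.

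For positive definiteness I would use the representation $\bar V^\mathrm{O}(\x_k,t_k)=\sum_{n=k}^{\infty}\bar q_\r(\x^\star_n,\u^\star_n,t_n)+\lim_{M\to\infty}\bar p_\r(\x^\star_{k+M},t_{k+M})$ obtained above. By Lemma~\ref{lem:rot_ocp}(ii) every summand is nonnegative, and the $n=k$ term $\bar q_\r(\x_k,\u^\star_k,t_k)$ is strictly positive for $\x_k\neq\x^\rr_k$; moreover, since $\bar q_\r(\cdot,\cdot,t_n)$ is a quadratic whose $(\x,\u)$-Hessian coincides with the positive-definite Hessian of $q_\r$, this term admits a lower bound $\alpha_3(\|\x_k-\x^\rr_k\|)$ with $\alpha_3\in\mathcal{K}_\infty$ independent of $k$. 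Hence $\bar V^\mathrm{O}(\x_k,t_k)>0$ whenever $\x_k\neq\x^\rr_k$, \emph{provided} the tail term $\lim_{M\to\infty}\bar p_\r(\x^\star_{k+M},t_{k+M})$ is nonnegative. Establishing this is the step I expect to be the main obstacle: unlike $\bar q_\r$, the rotated terminal cost $\bar p_\r(\cdot,t_n)$ contains a genuine linear part and is not sign-definite, so I would argue that the minimiser $\x^\star$ and the reference $\x^\rr$ converge to the same asymptotic (best feasible) behaviour — the very property that makes the $M$-limits in the statement meaningful — so that $\x^\star_{k+M}-\x^\rr_{k+M}\to0$ and, by continuity of $p_\r$ and boundedness of $\blambda^\rr$, $\bar p_\r(\x^\star_{k+M},t_{k+M})\to0$. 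Combined with the preceding paragraph this yields that $\bar V^\mathrm{O}(\cdot,t_k)$ is positive definite with minimum $\bar V^\mathrm{O}(\x^\rr_k,t_k)=0$, as claimed.
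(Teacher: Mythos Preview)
Your proof is correct and rests on the same device as the paper's --- Lemma~\ref{lem:rot_ocp} --- but you take a longer route for the terminal contribution. The paper's argument is one sentence: since (by the proof of Lemma~\ref{lem:rot_ocp}) the rotated stage \emph{and} terminal costs are both positive definite and vanish at $(\x^\rr_n,\u^\rr_n)$, the rotated value function, being their sum, is positive definite with minimum $0$ at $\x^\rr_k$. The step you flag as the ``main obstacle'' --- sign-definiteness of $\bar p_\r$ --- is in fact handled inside Lemma~\ref{lem:rot_ocp}'s proof by the remark ``$\nabla_M\bar p_\r=\nabla_M\mathcal{L}^\mathrm{O}$'': stationarity of the Lagrangian at the terminal index gives $\nabla p_\r(\x^\rr_M,t_M)+\blambda^\rr_M=0$, i.e., $\nabla\bar p_\r(\x^\rr_M,t_M)=0$, and since $\nabla^2\bar p_\r=P\succ0$ the rotated terminal cost is itself positive definite with minimum $0$ at $\x^\rr_M$. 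With this observation your turnpike-type argument (that $\x^\star_{k+M}-\x^\rr_{k+M}\to0$ forces the terminal term to vanish) becomes unnecessary, and $\bar V^\mathrm{O}\ge0$ follows directly from nonnegativity of every summand. Your telescoping derivation of the displayed identity is more explicit than the paper, which essentially treats that formula as the definition of $\bar V^\mathrm{O}$.
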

	\begin{proof}
	We note from the proof in Lemma~1, that the rotated stage and terminal costs are positive definite and that they are zero at the feasible reference $(\x^\rr_n,\u^\rr_n)$, hence, the rotated value function is also positive, and zero at $\x^\r_k$.
	\end{proof}

	\paolor{While Proposition~\ref{prop:stab_feas} proves the stability of system~(1) in closed-loop with the solution of~\eqref{eq:nmpc} under Assumptions~\ref{a:rec_ref} and~\ref{a:terminal}, in Theorem~\ref{thm:as_stab_0} we  will prove stability in case the reference trajectory does not satisfy Assumption~\ref{a:rec_ref}. The stability proof in Theorem~\ref{thm:as_stab_0} 
builds on the following \emph{ideal} formulation} 
	\begin{subequations}
	\label{eq:ideal_nmpc}
	\begin{align}
	\begin{split}V^\mathrm{i}(\x_k,t_k) = \min_{\bx,\bu}&\sum_{n=k}^{k+N-1} q_\r(\xb,\ub,t_n) \\
		&\hspace{2em}+p_{\tilde\y^\rr}(\xb[k+N],t_{k+N})
	\end{split} \\
		\mathrm{s.t.} \ \ &\eqref{eq:nmpcState}-\eqref{eq:nmpcInequality_known}, \ \xb[k+N] \in\mathcal{X}^\mathrm{f}_{\y^\rr}(t_{k+N}),\label{eq:ideal_nmpc_terminal}
	\end{align}
\end{subequations}
where
\begin{align}\label{eq:minimizer_tilde_yr}
		\tilde\y^\rr_k &:= \arg\min_{\x} p_{\y^\rr}(\x,t_k)-\blambda_k^{\rr\top}(\x-\x^\rr_k).
\end{align}
The Problems~\eqref{eq:nmpc} and~\eqref{eq:ideal_nmpc} only differ in the terminal cost and constraint: in~\eqref{eq:ideal_nmpc} they are written with respect to the solution $\y^\rr$ and $\blambda^\rr$ of~\eqref{eq:ocp} rather than~$\r$. In order to distinguish the solutions of~\eqref{eq:nmpc} and~\eqref{eq:ideal_nmpc}, we denote the solution of~\eqref{eq:nmpc} by $\x^\star$, $\u^\star$, and the solution of~\eqref{eq:ideal_nmpc} by $\x^\mathrm{i}$, $\u^\mathrm{i}$. In addition, when the stage cost $\bar{q}_\r$ and terminal cost $\bar p_{\by^\rr}$ are used, we obtain the corresponding \emph{rotated} formulation of~\eqref{eq:ideal_nmpc}
\begin{align}
	\label{eq:ideal_rot_nmpc}
	\begin{split}\bar V^\mathrm{i}(\x_k,t_k) = \min_{\x,\u}  &\sum_{n=k}^{k+N-1} \bar q_\r(\xb,\ub,t_n) \\
		&\hspace{2em}+ \bar p_{\tilde\y^\mathrm{r}}(\xb[k+N],t_{k+N})
	\end{split} \\
	\mathrm{s.t.}\hspace{0em} \ \ &\eqref{eq:nmpcState}-\eqref{eq:nmpcInequality_known}, \ \xb[k+N] \in\mathcal{X}^\mathrm{f}_{\y^\mathrm{r}}(t_{k+N}),\nonumber
\end{align}
where the rotated terminal cost is defined as
\begin{align}\begin{split}\label{eq:rot_tilde_terminal_cost}
	\bar{p}_{\by^\rr}(\xb,t_n)&:= p_{\by^\rr}(\xb,t_n)-p_{\by^\rr}(\x_{n}^\rr,t_n)\\
	&+\blambda^{\rr\top}_{n}(\xb-\x^\rr_{n}).\end{split}
\end{align}
Note that by Lemma~\ref{lem:rot_ocp}, the rotated cost $\bar q_\r$ penalizes deviations from $\y^\rr$, i.e., the solution to \eqref{eq:ocp}. We will prove next that $\bar p_{\by^\r}$ also penalizes deviations from $\y^\r$, implying that \emph{the rotated ideal MPC formulation is of tracking type}.

	\begin{Lemma}
		\label{lem:rot_mpc}
		Consider the \emph{rotated} \emph{ideal} MPC Problem~\eqref{eq:ideal_rot_nmpc}, formulated using the rotated costs $\bar q_\r$ and $\bar{p}_{\by^\r}$, and the terminal set $\mathcal{X}_{\y^\r}^\mathrm{f}$. Then, the primal solution of~\eqref{eq:ideal_rot_nmpc} coincides with the primal solution of the ideal MPC Problem~\eqref{eq:ideal_nmpc}.
	\end{Lemma}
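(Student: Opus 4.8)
The plan is to carry over to the finite MPC horizon the same telescoping cancellation that establishes claim~i) of Lemma~\ref{lem:rot_ocp}. Observe first that Problems~\eqref{eq:ideal_nmpc} and~\eqref{eq:ideal_rot_nmpc} have \emph{identical feasible sets}: both are subject to \eqref{eq:nmpcState}--\eqref{eq:nmpcInequality_known} and to $\xb[k+N]\in\mathcal{X}^\mathrm{f}_{\y^\rr}(t_{k+N})$. It therefore suffices to show that the two objectives differ, on this common feasible set, by a term that does not depend on the decision variables $(\bx,\bu)$; minimizing either objective then returns the same primal argument.

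First I would subtract the two objectives term by term using the definitions of $\bar q_\r$ and of the rotated terminal cost $\bar p_{\tilde\y^\rr}$ in~\eqref{eq:rot_tilde_terminal_cost}, recalling that $\tilde\y^\rr$ is the precomputed minimizer~\eqref{eq:minimizer_tilde_yr}. Each stage $n$ contributes the constant $-q_\r(\x^\rr_n,\u^\rr_n,t_n)$ together with the multiplier terms $\blambda_n^{\rr\top}(\xb-\x^\rr_n)-\blambda_{n+1}^{\rr\top}\bigl(f_n(\xb,\ub)-f_n(\x^\rr_n,\u^\rr_n)\bigr)$, and the terminal stage contributes $-p_{\tilde\y^\rr}(\x^\rr_{k+N},t_{k+N})+\blambda_{k+N}^{\rr\top}(\xb[k+N]-\x^\rr_{k+N})$. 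On the feasible set I would then invoke the dynamics~\eqref{eq:nmpcDynamics} to write $f_n(\xb,\ub)=\xb[n+1]$, and the feasibility of $\y^\rr$ for~\eqref{eq:ocp} (Assumption~\ref{a:rec_ref}) to write $f_n(\x^\rr_n,\u^\rr_n)=\x^\rr_{n+1}$. Setting $d_n:=\xb[n]-\x^\rr_n$, the multiplier terms summed over $n=k,\dots,k+N-1$ collapse telescopically to $\blambda_k^{\rr\top}d_k-\blambda_{k+N}^{\rr\top}d_{k+N}$, exactly as in the proof of Lemma~\ref{lem:rot_ocp}.

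It then remains to collect the boundary terms. The leftover $-\blambda_{k+N}^{\rr\top}d_{k+N}$ cancels the $+\blambda_{k+N}^{\rr\top}(\xb[k+N]-\x^\rr_{k+N})$ produced by $\bar p_{\tilde\y^\rr}$, and by~\eqref{eq:nmpcState} one has $d_k=\x_k-\x^\rr_k$, which is fixed. Hence
\begin{equation*}
\begin{split}
\bar V^\mathrm{i}(\x_k,t_k)-V^\mathrm{i}(\x_k,t_k)=\ &\blambda_k^{\rr\top}(\x_k-\x^\rr_k)-p_{\tilde\y^\rr}(\x^\rr_{k+N},t_{k+N})\\
&-\sum_{n=k}^{k+N-1}q_\r(\x^\rr_n,\u^\rr_n,t_n),
\end{split}
\end{equation*}
which depends only on $(\x_k,t_k)$ and on the precomputed trajectories $\y^\rr,\blambda^\rr,\tilde\y^\rr$, not on $(\bx,\bu)$. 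Since the two problems share the same feasible set and their costs differ by this constant, they have the same primal minimizer, which is the claim.

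I expect the only delicate point to be the index bookkeeping of the telescoping sum: one must keep the $\blambda^\rr$ indices aligned so that the interior terms cancel and precisely the $n=k$ and $n=k+N$ boundary terms remain, and so that the surviving $n=k+N$ term is exactly the one that the definition~\eqref{eq:rot_tilde_terminal_cost} of $\bar p_{\tilde\y^\rr}$ is designed to absorb. It is worth stating explicitly that the cancellation hinges on the dynamical feasibility of $\y^\rr$ — i.e.\ $f_n(\x^\rr_n,\u^\rr_n)=\x^\rr_{n+1}$, which holds because $\y^\rr$ solves~\eqref{eq:ocp} — mirroring the role this fact plays in Lemma~\ref{lem:rot_ocp}; everything else is a routine rearrangement.
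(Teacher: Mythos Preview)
Your argument is correct and is essentially the paper's own approach: the paper too reduces the claim to the telescoping cancellation of the $\blambda^\rr$ terms already established in Lemma~\ref{lem:rot_ocp}, noting only that the leftover boundary term at $n=k+N$ is absorbed by the definition~\eqref{eq:rot_tilde_terminal_cost} of $\bar p_{\tilde\y^\rr}$. The one thing the paper spells out that you do not is the observation that $\bar p_{\tilde\y^\rr}(\x^\rr_k,t_k)=0$ and $\nabla\bar p_{\tilde\y^\rr}(\x^\rr_k,t_k)=0$ (a consequence of the specific choice~\eqref{eq:minimizer_tilde_yr} of $\tilde\y^\rr$), which is not needed for the primal-equivalence claim of the lemma but supports the surrounding narrative that the rotated ideal problem is of tracking type.
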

	\begin{proof}
		From~\eqref{eq:minimizer_tilde_yr} and~\eqref{eq:rot_tilde_terminal_cost} we have that $\bar p_{\by^\r}(\x_k^\rr,t_k) =0$ and that
		$\nabla \bar p_{\by^\rr}(\x^\rr_k,t_k) = \nabla p_{\by^\rr}(\x^\rr_k,t_k) + \nabla p_{\y^\rr}(\by^\rr_k,t_k) = 0$, since the terminal costs are quadratic~\eqref{eq:terminal_cost}. The proof then follows along the same lines as Lemma~\ref{lem:rot_ocp} and~\cite{Diehl2011,Amrit2011a}.
	\end{proof}

	In order to prove Theorem~\ref{thm:as_stab_0}, we  need that the terminal conditions of the rotated ideal formulation~\eqref{eq:ideal_rot_nmpc} satisfy Assumption~\ref{a:terminal}. To that end, we introduce the following assumption.
	\begin{Assumption}\label{a:terminal_for_rotated}
	There exists a parametric stabilizing terminal set  $\mathcal{X}^\mathrm{f}_{\y^\rr}(t)$ and a terminal control law $\kappa^\mathrm{f}_{\y^\rr}(\x,t)$ yielding:
	\begin{align*}
		\mathbf{x}_{+}^\kappa=f_k(\mathbf{x}_k,\kappa^\mathrm{f}_{\y^\rr}(\x_k,t)), && t_+ = t_k + t_\mathrm{s},
	\end{align*}
	so that
	$\bar p_{\by^\rr}(\x_{+}^\kappa,t_{+})- \bar p_{\by^\rr}(\x_k,t_k) \leq{} - \bar q_\r(\x_k,\kappa^\mathrm{f}_{\y^\rr}(\x_k,t_k),t_k)$, $\x_k\in\mathcal{X}^\mathrm{f}_{\y^\rr}(t_k)\Rightarrow \x^\kappa_{+}\in\mathcal{X}^\mathrm{f}_{\y^\rr}(t_{+})$, and $h(\x_k,\kappa^\mathrm{f}_{\y^\rr}(\x_k,t_k)) \leq{} 0$ hold for all $k\in\mathbb{I}_0^\infty$.
\end{Assumption}	
Note that Assumption~\ref{a:terminal_for_rotated} only differs from Assumption~\ref{a:terminal} by the fact that the set and control law are centered on $\y^\rr$ rather than $\r$, and that the costs are rotated.
	\begin{Theorem}
		\label{thm:as_stab_0}
		Suppose that Assumptions \ref{a:cont} and~\ref{a:terminal_for_rotated} hold, and that Problem~\eqref{eq:ocp} is feasible for initial state $(\x_k,t_k)$. Then, system~\eqref{eq:sys} in closed-loop with the ideal MPC~\eqref{eq:ideal_nmpc} is asymptotically stabilized to the optimal trajectory $\bx^{\mathrm{r}}$.
	\end{Theorem}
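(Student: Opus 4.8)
The plan is to reduce Theorem~\ref{thm:as_stab_0} to the nominal tracking result of Proposition~\ref{prop:stab_feas}, applied not to the ideal MPC~\eqref{eq:ideal_nmpc} directly but to its rotated counterpart~\eqref{eq:ideal_rot_nmpc}, and then to transfer the conclusion back to~\eqref{eq:ideal_nmpc} through the primal-solution equivalence of Lemma~\ref{lem:rot_mpc}. The point is that, by Lemmas~\ref{lem:rot_ocp} and~\ref{lem:rot_mpc}, Problem~\eqref{eq:ideal_rot_nmpc} is a genuine \emph{tracking} MPC for the \emph{feasible} reference $\y^\rr$: the rotated stage cost $\bar q_\r$ and the rotated terminal cost $\bar p_{\by^\rr}$ are nonnegative and vanish precisely at $(\x^\rr_n,\u^\rr_n)$ and $\x^\rr_n$, respectively, and they inherit the quadratic structure of $q_\r$ and $p$ because the dynamics are LTV. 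Hence all ingredients needed to invoke Proposition~\ref{prop:stab_feas} are, in principle, in place, and the task is to check them one by one.

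Concretely, I would verify the three assumptions of Proposition~\ref{prop:stab_feas} for Problem~\eqref{eq:ideal_rot_nmpc} with $\y^\rr$ in the role of the reference. First, Assumption~\ref{a:cont}: continuity and the identities $\bar q_\r(\x^\rr_n,\u^\rr_n,t_n)=0$, $\bar p_{\by^\rr}(\x^\rr_n,t_n)=0$ come from Lemmas~\ref{lem:rot_ocp} and~\ref{lem:rot_mpc}; the lower bound $\bar q_\r(\x_n,\u_n,t_n)\geq\alpha_1(\|\x_n-\x^\rr_n\|)$ and the upper bound $\bar p_{\by^\rr}(\x_n,t_n)\leq\alpha_2(\|\x_n-\x^\rr_n\|)$ follow from $\nabla^2_n\bar q_\r=\nabla^2_n q_\r\succ0$ and $\nabla^2\bar p_{\by^\rr}=\nabla^2 p\succ0$ together with the vanishing gradients at the reference, so both rotated costs are positive-definite quadratics recentred at $\y^\rr$. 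Second, Assumption~\ref{a:rec_ref}: $\y^\rr$ satisfies the dynamics and the path constraints \emph{by construction}, being the solution of OCP~\eqref{eq:ocp}. Third, Assumption~\ref{a:terminal} for the rotated problem is \emph{exactly} Assumption~\ref{a:terminal_for_rotated}. Finally, initial feasibility of~\eqref{eq:ideal_rot_nmpc} at $(\x_k,t_k)$ — which coincides with feasibility of~\eqref{eq:ideal_nmpc} since the two share the constraint set — follows from feasibility of~\eqref{eq:ocp} together with the design of $\mathcal{X}^\mathrm{f}_{\y^\rr}$, after which recursive feasibility is the standard shift argument using $\kappa^\mathrm{f}_{\y^\rr}$ already embedded in the proof of Proposition~\ref{prop:stab_feas}.

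Granting these checks, Proposition~\ref{prop:stab_feas} yields that system~\eqref{eq:sys} in closed loop with the receding-horizon law extracted from~\eqref{eq:ideal_rot_nmpc} is asymptotically stable with respect to $\x^\rr$, with the rotated value function $\bar V^\mathrm{i}$ as the Lyapunov function (positive-definite in $\|\x_k-\x^\rr_k\|$ and decreasing along the closed loop, exactly as in the standard tracking-MPC argument, the $\blambda^\rr$ terms telescoping over the horizon and the terminal decrease coming from Assumption~\ref{a:terminal_for_rotated}). It then only remains to observe, via Lemma~\ref{lem:rot_mpc}, that the primal minimiser of~\eqref{eq:ideal_rot_nmpc} equals that of~\eqref{eq:ideal_nmpc}; consequently the two MPC feedback laws, and hence the two closed-loop trajectories, are identical, so the ideal MPC~\eqref{eq:ideal_nmpc} asymptotically stabilises~\eqref{eq:sys} to $\x^\rr$, which is the claim.

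The main obstacle is not the Lyapunov bookkeeping but the passage that makes it legitimate: the \emph{un-rotated} ideal value function $V^\mathrm{i}$ is in general \emph{not} a Lyapunov function (the scheme is economic, not tracking, with respect to $\r$, and $V^\mathrm{i}$ need not vanish at $\x^\rr$), so the decrease property must be read off $\bar V^\mathrm{i}$, and the entire reduction stands or falls on the equality of primal solutions in Lemma~\ref{lem:rot_mpc}. The second delicate point is the \emph{global}, not merely local, positive-definiteness of the rotated costs and the associated $\mathcal{K}_\infty$ bounds in the first verification step: this is precisely where the LTV dynamics and quadratic costs enter, since they force $\nabla^2_n\bar q_\r=\nabla^2_n q_\r$ everywhere, and it is exactly the obstruction flagged in Remark~\ref{rem:nl_sys} for nonlinear systems. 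A minor remaining item is bridging ``feasibility of~\eqref{eq:ocp}'' to ``feasibility of~\eqref{eq:ideal_nmpc}'', which uses the tracking nature of the rotated OCP (finiteness of $\bar V^\mathrm{O}$ forces its solution to approach $\y^\rr$ and thus to enter $\mathcal{X}^\mathrm{f}_{\y^\rr}$ within $N$ steps).
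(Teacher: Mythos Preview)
Your proposal is correct and follows essentially the same route as the paper: show that the rotated ideal problem~\eqref{eq:ideal_rot_nmpc} is a bona fide tracking MPC for the feasible reference $\y^\rr$ (via Lemmas~\ref{lem:rot_ocp} and~\ref{lem:rot_mpc} and Assumption~\ref{a:terminal_for_rotated}), conclude asymptotic stability of the rotated scheme by the standard tracking argument (the paper re-derives the $\mathcal{K}_\infty$ bounds and decrease inequality explicitly, you package this as Proposition~\ref{prop:stab_feas}), and then transfer the conclusion to~\eqref{eq:ideal_nmpc} through the primal-solution equivalence of Lemma~\ref{lem:rot_mpc}. Your identification of the two delicate points---that $V^\mathrm{i}$ itself is not a Lyapunov function, and that global positive-definiteness of $\bar q_\r$ hinges on the LTV structure---matches exactly what the paper relies on and flags in Remark~\ref{rem:nl_sys}.
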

	\begin{proof}
		By Lemma~\ref{lem:rot_mpc}, the rotated ideal MPC problem has positive-definite stage and terminal costs penalizing deviations from the optimal trajectory $\y^{\mathrm{r}}$. Hence, the rotated ideal MPC problem is of tracking type. 
			
		Assumption~\ref{a:cont} directly entails a lower bounding by a $\mathcal{K}_\infty$ function, and can also be used to prove an upper bound~\cite[Theorem 2.19]{rawlings2009model}, such that the following holds
		\begin{equation*}
			\alpha_1(\|\x_k-\x^\rr_k\|) \leq{} \bar V^\mathrm{i}(\x_k,t_k)\leq{} \alpha_2(\|\x_k-\x^\rr_k\|),
		\end{equation*}
		where $\alpha_1,\alpha_2\in\mathcal{K}_\infty$. Then, solving Problem~\eqref{eq:ideal_rot_nmpc}, we obtain $\bar V^{\mathrm{i}}(\x_k,t_k)$ and the optimal trajectories $\{\xb[k]^\mathrm{i},...,\xb[k+N]^\mathrm{i}\}$, and $\{\ub[k]^\mathrm{i},...,\ub[k+N-1]^\mathrm{i}\}$. By relying on Assumptions~\ref{a:rec_ref} and~\ref{a:terminal}, using terminal control law $\kappa^\mathrm{f}_{\y^\rr}$, we can construct the feasible sub-optimal trajectories $\{\xb[k+1]^\mathrm{i},...,\xb[k+N]^\mathrm{i},f_{k+N}(\xb[k+N]^\mathrm{i},\kappa^\mathrm{f}_{\y^\r})\}$ and $\{\ub[k+1]^\mathrm{i},...,\ub[k+N]^\mathrm{i},\kappa^\mathrm{f}_{\y^\rr}\}$ at time $k+1$, which can be used to derive the decrease condition following standard arguments~\cite{rawlings2009model,borrelli2017predictive}:
		$$\bar{V}^\mathrm{i}(\x_{k+1},t_{k+1})-\bar{V}^\mathrm{i}(\x_k,t_k)\leq{}-\alpha_3(\|\x_k-\x^\rr_k\|).$$ 
		This entails that the \emph{rotated} \emph{ideal} value function $\bar{V}^\mathrm{i}(\x_k,t_k)$ is a Lyapunov function, and that the closed-loop system is asymptotically stabilized to $\x^\rr$. 
		Finally, using Lemma~\ref{lem:rot_mpc} we establish asymptotic stability also for the \emph{ideal} MPC scheme~\eqref{eq:ideal_nmpc}, since the primal solutions of the two problems coincide.
	\end{proof}

	Theorem~\ref{thm:as_stab_0} establishes the first step towards the desired result: 
	an MPC problem can be formulated using an \emph{infeasible reference}, which stabilizes system~\eqref{eq:sys} to the optimal trajectory of Problem~\eqref{eq:ocp} provided that the appropriate terminal conditions are used.

	At this stage, the main issue is to express the terminal constraint set as 
	a positive invariant set containing $\bx^{\mathrm{r}}$, and the terminal control law stabilizing the system to  $\bx^{\mathrm{r}}$.
	To that end, one needs to know the feasible reference trajectory~$\bx^{\mathrm{r}}$, i.e., to solve Problem~\eqref{eq:ocp}. Since solving Problem~\eqref{eq:ocp} is not practical, we prove in the next section how sub-optimal terminal conditions can be used instead to prove ISS for the closed-loop system.

	\subsection{Practical MPC and ISS}\label{sec:iss}
	
	In this subsection, we analyze the case in which the terminal conditions are not enforced based on the feasible reference trajectory, but 
	rather based on an \emph{approximatively feasible} reference (see Assumption~\ref{a:approx_feas}). 
	Since in that case asymptotic stability cannot be proven, we will prove ISS for the closed-loop system, where the input is some terminal reference $\y^{\mathrm{f}}$. In particular, we are interested in the practical approach $\y^{\mathrm{f}}=\r(t_{k+N})$ and the ideal setting $\y^{\mathrm{f}}=\y^{\mathrm{r}}(t_{k+N})$. 
	To that end, we define the following closed-loop dynamics
	\begin{align}\label{eq:iss_dynamics}
	\x_{k+1}(\y^\mathrm{f}) = f_k(\x_{k},\u_\mathrm{MPC}(\x_{k},\y^\mathrm{f})) = \bar f_k(\x_{k},\y^\mathrm{f}),
	\end{align}
	where we stress that~$\u_\mathrm{MPC}$ is obtained as~$\ub[k]^\star$ solving problem~\eqref{eq:nmpc} (in case one uses $\y^\mathrm{f}=\r$ and terminal cost $p_\r$); or as~$\ub[k]^\mathrm{i}$ solving the ideal problem~\eqref{eq:ideal_nmpc} (in case one uses $\y^\mathrm{f}=\y^\rr$ and terminal cost $p_{\tilde\y^\r}$). In the following we will use the notation $\x_{k+1}(\y^\mathrm{f})$ to stress that the terminal reference $\y^\mathrm{f}$ is used in the computation of the control yielding the next state. Additionally, we define 
	the following quantities 
	\begin{align*}
	\bar J_{\by^\mathrm{r}}^{\star}(\x_k,t_k) &:= \sum_{n=k}^{N-1} \bar q_\r(\xb^\star,\ub^\star,t_n) + \bar p_{\by^\mathrm{r}}(\xb[k+N]^\star,t_{k+N}), \\
	\bar J_\r^{\mathrm{i}}(\x_k,t_k) &:= \sum_{n=k}^{N-1} \bar q_\r(\xb^\mathrm{i},\ub^\mathrm{i},t_n) + \bar p_\r(\xb[k+N]^\mathrm{i},t_{k+N}),
	\end{align*} 
	and we remind that
	\begin{align*}
	\bar V(\x_k,t_k) &= \sum_{n=k}^{N-1} \bar q_\r(\xb^\star,\ub^\star,t_n) + \bar p_\r(\xb[k+N]^\star,t_{k+N}),\\
	\bar V^\mathrm{i}(\x_k,t_k) &= \sum_{n=k}^{N-1} \bar q_\r(\xb^\mathrm{i},\ub^\mathrm{i},t_n) + \bar p _{\by^\mathrm{r}}(\xb[k+N]^\mathrm{i},t_{k+N}).
	\end{align*}
	
	Before formulating the stability result in the next theorem, we need to introduce an additional assumption on the reference infeasibility.
	\begin{Assumption}[Approximate feasibility of the reference]
		\label{a:approx_feas}
		The reference $\y^{\mathrm{f}}$ satisfies the  constraints \eqref{eq:nmpcInequality_known}, i.e., $h(\bx^{\mathrm{f}}_n,\bu^{\mathrm{f}}_n) \leq{} 0$, $n\in \mathbb{I}_k^{k+N-1}$, for all $k\in\mathcal{N}^+$. Additionally, recursive feasibility holds for both Problem~\eqref{eq:nmpc} and~\eqref{eq:ideal_nmpc} when the system is controlled in closed-loop using the feedback from Problem~\eqref{eq:nmpc}.
	\end{Assumption}
	
	\begin{Remark}
		Assumption~\ref{a:approx_feas} essentially only requires that the reference used in the definition of the terminal conditions (constraint and cost) is feasible with respect to the system constraints, and not the system dynamics. However, recursive feasibility holds if the reference satisfies, e.g., $\|\x_{n+1}^\mathrm{f}-f_n(\x_n^\mathrm{f},\u_n^\mathrm{f})\|\leq{}\epsilon$, for some small $\epsilon$ i.e., if the reference satisfies the system dynamics approximately.
		Note that, if $\epsilon=0$, then Assumption~\ref{a:rec_ref} is satisfied and Assumption~\ref{a:approx_feas} is not needed anymore. Finally, the infeasibility due to $\epsilon\neq0$ could be formally accounted for so as to satisfy Assumption~\ref{a:approx_feas} by taking a robust MPC approach, see, e.g.,~\cite{Mayne2005,Chisci2001}. 
	\end{Remark}
From a practical standpoint, Assumption~\ref{a:approx_feas} sets a rather mild requirement. In fact, it is not uncommon to use infeasible references for simplicity or satisfying approximated system dynamics to capture the most relevant dynamics of the system (keeping $\epsilon$ small).
	
	{We are now ready to state the main result of the paper.}

	\begin{Theorem}\label{thm:iss}
		Suppose that Problem~\eqref{eq:ocp} is feasible and Assumptions~\ref{a:cont} and~\ref{a:terminal} hold for the reference $\y^{\mathrm{r}}$ with costs $\bar{q}_\r$ and $\bar{p}_{\by^\r}$ and terminal set $\mathcal{X}_{\y^\r}$. Suppose moreover that Problem~\eqref{eq:nmpc} and Problem~\eqref{eq:ideal_nmpc} are feasible at time $k$ with inital state $(\x_k,t_k)$, and that reference $\y^\mathrm{f}$, with terminal set $\mathcal{X}^\mathrm{f}_{\y^\mathrm{f}}$, satisfies Assumption~\ref{a:approx_feas}. Then, system~\eqref{eq:iss_dynamics} obtained from~\eqref{eq:sys} in closed-loop with MPC formulation~\eqref{eq:nmpc} is ISS.
	\end{Theorem}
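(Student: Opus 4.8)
The plan is to show that the rotated value function $\bar V$ of the practical MPC problem~\eqref{eq:nmpc} serves as an ISS-Lyapunov function for the closed-loop dynamics~\eqref{eq:iss_dynamics}, with the ``input'' being the mismatch between the practical terminal reference $\y^{\mathrm f}=\r$ and the ideal one $\y^{\mathrm r}$. The first step is to establish the sandwich bounds: by Lemma~\ref{lem:rot_ocp} the rotated stage cost $\bar q_\r$ is positive definite and vanishes on $\x^\rr$, and under Assumption~\ref{a:terminal} (for the rotated costs) a lower bound $\alpha_1(\|\x_k-\x^\rr_k\|)\le\bar V(\x_k,t_k)$ follows immediately from the first stage term, while an upper bound $\bar V(\x_k,t_k)\le\alpha_2(\|\x_k-\x^\rr_k\|)$ is obtained as in~\cite[Theorem 2.19]{rawlings2009model}. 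The key quantitative device is to compare $\bar V$ (practical) with $\bar V^{\mathrm i}$ (ideal): since both problems share the same dynamics and path constraints and differ only in the terminal cost/set, and since $\bar p_\r$ versus $\bar p_{\by^\rr}$ differ by a quantity controlled by $\|\r(t_{k+N})-\y^{\mathrm r}(t_{k+N})\|$, one derives $|\bar V(\x_k,t_k)-\bar V^{\mathrm i}(\x_k,t_k)|\le\sigma(\|\y^{\mathrm f}-\y^{\mathrm r}\|)$ for some $\mathcal K$-function $\sigma$, using recursive feasibility from Assumption~\ref{a:approx_feas} to shuttle candidate trajectories between the two problems.

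Next I would derive the descent inequality. Apply the practical MPC feedback $\u_{\mathrm{MPC}}(\x_k,\r)=\ub[k]^\star$ to get $\x_{k+1}$. The standard shifted-trajectory construction — appending the terminal control law $\kappa^{\mathrm f}_{\y^\rr}$ at the tail — gives a feasible candidate for the problem at $k+1$, but here there are two sources of slack: (i) the candidate is feasible for the \emph{ideal} problem at $k{+}1$ (whose terminal set is centered on $\y^{\mathrm r}$), not directly for the practical one, and (ii) the reference mismatch perturbs the terminal-cost bookkeeping. Combining the telescoping of the rotated terminal-cost decrease from Assumption~\ref{a:terminal} with the comparison bound from the first paragraph yields
\begin{align*}
\bar V(\x_{k+1},t_{k+1})-\bar V(\x_k,t_k)\le -\alpha_3(\|\x_k-\x^\rr_k\|)+\sigma_1(\|\y^{\mathrm f}-\y^{\mathrm r}\|),
\end{align*}
for suitable $\alpha_3\in\mathcal K_\infty$ and $\sigma_1\in\mathcal K$, where the perturbation term bundles the contributions of using the $\r$-based terminal conditions in place of the $\y^{\mathrm r}$-based ones. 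Together with the sandwich bounds, this is precisely the definition of an ISS-Lyapunov function, and the standard converse argument (e.g.~\cite{Grune2011}) delivers ISS of~\eqref{eq:iss_dynamics} with respect to $\|\y^{\mathrm f}-\y^{\mathrm r}\|$, hence with respect to the reference infeasibility.

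The main obstacle is making the comparison step between the practical and ideal value functions rigorous and uniform in $k$: I need a $\mathcal K$-function bound on $|\bar V-\bar V^{\mathrm i}|$ that depends only on the terminal reference mismatch and not on $\x_k$ or $k$, which requires controlling how a perturbation of the terminal set $\mathcal X^{\mathrm f}_{\y^\rr}\mapsto\mathcal X^{\mathrm f}_\r$ and of the terminal cost propagates backwards through $N$ stages of the optimization. Continuity of the value function in the problem data, together with the quadratic (hence Lipschitz-on-bounded-sets) structure of the costs and the LTV dynamics, should suffice, but care is needed because the terminal \emph{constraint} changes, not just the cost — this is where Assumption~\ref{a:approx_feas}'s recursive-feasibility clause does the essential work, guaranteeing that the shifted candidate actually lies in the relevant terminal set so that the bound does not blow up. A secondary technical point is that the descent only holds on the (recursively feasible) region where both problems admit solutions, so the ISS claim is understood on that forward-invariant set.
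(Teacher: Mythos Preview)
Your strategy differs from the paper's in one essential choice: you take the \emph{practical} rotated value function $\bar V$ as the ISS-Lyapunov candidate, whereas the paper takes the \emph{ideal} rotated value function $\bar V^{\mathrm i}$. This is not a cosmetic difference, and it is where your argument has a genuine gap. The lower sandwich bound $\alpha_1(\|\x_k-\x^\rr_k\|)\le\bar V(\x_k,t_k)$ does \emph{not} follow ``immediately from the first stage term'': while each $\bar q_\r$ is indeed nonnegative by Lemma~\ref{lem:rot_ocp}, the rotated \emph{practical} terminal cost
\[
\bar p_\r(\x,t_n)=p_\r(\x,t_n)-p_\r(\x^\rr_n,t_n)+\blambda_n^{\rr\top}(\x-\x^\rr_n)
\]
is a quadratic whose minimizer is \emph{not} $\x^\rr_n$ (its gradient at $\x^\rr_n$ equals $2P(\x^\rr_n-\r^\x_n)+\blambda^\rr_n\neq 0$ in general), so $\bar p_\r$ can take negative values and drag $\bar V$ below $\alpha_1$. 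The paper avoids this precisely by centering the ideal terminal cost on $\tilde\y^\rr$ defined in~\eqref{eq:minimizer_tilde_yr}: that shift is engineered so that $\nabla\bar p_{\by^\rr}(\x^\rr_n)=0$, hence $\bar p_{\by^\rr}\ge 0$, and the sandwich bounds for $\bar V^{\mathrm i}$ come directly from Theorem~\ref{thm:as_stab_0}.

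With $\bar V^{\mathrm i}$ as the candidate, the paper's decrease argument proceeds by the chain
\[
\bar V^{\mathrm i}(\x_{k+1}(\y^{\mathrm f}),t_{k+1})\ \le\ \bar J^\star_{\by^\rr}(\x_k,t_k)-\alpha_3\ \le\ \bar V(\x_k,t_k)+\sigma_2-\alpha_3\ \le\ \bar J^{\mathrm i}_\r(\x_k,t_k)+\sigma_2-\alpha_3\ \le\ \bar V^{\mathrm i}(\x_k,t_k)+\sigma-\alpha_3,
\]
which is close in spirit to your shuttling idea but is organized around $\bar V^{\mathrm i}$ throughout. A second point you underplay is the terminal-set mismatch: Assumption~\ref{a:approx_feas} guarantees recursive feasibility of both problems, but it does \emph{not} guarantee that the shifted practical trajectory appended with $\kappa^{\mathrm f}_{\y^\rr}$ lands in $\mathcal X^{\mathrm f}_{\y^\rr}(t_{k+N+1})$. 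The paper handles this explicitly by invoking an exact-penalty relaxation of the terminal constraint, so that the infeasible candidate still yields a valid upper bound on $\bar V^{\mathrm i}(\x_{k+1}(\y^{\mathrm f}),t_{k+1})$; your proposal gestures at this issue but does not supply the mechanism.
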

	\begin{proof}
		
		We prove the result using the value function $\bar V^\mathrm{i}(\x_k,t_k)$ of the rotated ideal Problem~\eqref{eq:ideal_rot_nmpc} as an ISS-Lyapunov function candidate \cite{jiang2001input}. From the prior analysis in Theorem \ref{thm:as_stab_0} we know that Assumption~\ref{a:rec_ref} holds for $\y^\rr$ since Problem~\eqref{eq:ocp} is feasible, and that $\bar V^\mathrm{i}(\x_k,t_k)$ is a Lyapunov function {when the \emph{ideal} terminal conditions} $\y^\mathrm{f}=\y^\rr$ are used. Hence, when {we apply the ideal control input $\ub[k][k]^\mathrm{i}$, i.e., use \eqref{eq:iss_dynamics} to obtain the next state $\x_{k+1}(\y^\rr)=\bar{f}_k(\x_k,\y^\rr)$}, we have the following relations
		\begin{align*}
			\alpha_1(\| \x_k-\x^\rr_k \|) \leq \bar V^\mathrm{i}(\x_k,t_k) \leq \alpha_2(\| \x_k-\x^\rr_k \|),\\
			\bar V^\mathrm{i}(\x_{k+1}(\y^\rr),t_{k+1}) - \bar V^\mathrm{i}(\x_{k},t_k) \leq -\alpha_3(\| \x_k-\x^\rr_k \|),
		\end{align*}
		with $\alpha_i\in \mathcal{K}_\infty$, $i=1,2,3$. 
		
		We are left with proving ISS, i.e., that $\exists \, \sigma\in\mathcal{K}$ such that when the reference $\y^\mathrm{f}$ is treated as an external input, the next state is given by $\x_{k+1}(\y^\mathrm{f})=\bar f_k(\x_k,\y^\mathrm{f})$, the following holds
		\begin{align}\begin{split}
			\label{eq:iss_decrease}
			\bar V^\mathrm{i}({\x_{k+1}(\y^\mathrm{f})},t_{k+1})- \bar V^\mathrm{i}(\x_{k},t_k)\leq&\sigma( \|  \y^\mathrm{f}-\y^\rr  \| )\\&-\alpha_3(\| \x_k-\bx^\rr_k \|).
		\end{split}\end{align}
		
		In order to bound $\bar V^\mathrm{i}({\x_{k+1}(\y^\mathrm{f})},t_{k+1}) - \bar V^\mathrm{i}(\x_{k},t_{k})$, we first derive an upper bound on $\bar J_\r^\mathrm{i}$ which depends on $\bar V^\mathrm{i}$.
		To that end, we observe that the rotated cost of the ideal trajectory $\xb^\mathrm{i}$, $\ub^\mathrm{i}$ satisfies 
		\begin{align*}
			\bar J_\r^\mathrm{i}(\x_{k},t_k)&=  \bar V^\mathrm{i}(\x_{k},t_k)-\bar p_{\by^\mathrm{r}}(\xb[k+N]^\mathrm{i},t_{k+N})\\
			&+\bar p_\r(\xb[k+N]^\mathrm{i},t_{k+N}).
		\end{align*}
		Defining
		\begin{align*}
			\phi(\y^\mathrm{f})&:=\bar p_{\y^\mathrm{f}}(\xb[k+N]^\mathrm{i},t_{k+N})-\bar p_{\by^\mathrm{r}}(\xb[k+N]^\mathrm{i},t_{k+N}),
		\end{align*} 
		there exists a $\sigma_1 \in \mathcal{K}$ such that $\phi(\y^\mathrm{f}) \leq{} \sigma_1(\|\y^\mathrm{f}-\y^\r\|)$
		since, by~\eqref{eq:terminal_cost}, $\phi(\y^\mathrm{f})$ is a continuous function of $\y^\mathrm{f}$ and $\phi(\y^\mathrm{r})=0$.
		Then, the following upper bound is obtained
		\begin{align*}
			\bar J_\r^\mathrm{i}(\x_{k},t_k)&\leq \bar V^\mathrm{i}(\x_{k},t_k) + \sigma_1(\| \y^\mathrm{f}-\y^\mathrm{r} \| ).
		\end{align*}
		Upon solving Problem~\eqref{eq:nmpc}, we obtain $\bar V(\x_{k},t_k)\leq\bar J_\r^\mathrm{i}(\x_{k},t_k)$. Starting from the optimal solution $\x^\star$, and $\u^\star$, we will construct an upper bound on the decrease condition. To that end, we first need to evaluate the cost of this trajectory, i.e., 
		\begin{align*}\begin{split}
			\bar J_{\by^\mathrm{r}}^{\star}(\x_{k},t_k)&=\bar V(\x_{k},t_k)-\bar p_\r(\xb[k+N]^\star,t_{k+N})\\
			&+\bar p_{\by^\mathrm{r}}(\xb[k+N]^\star,t_{k+N}).
		\end{split}\end{align*}
		Using the same reasoning as before, there exists $\sigma_2 \in \mathcal{K}$ such that
		\begin{align*}
			&\bar p_{\by^\mathrm{r}}(\xb[k+N]^\star,t_{k+N})-\bar p_\r(\xb[k+N]^\star,t_{k+N})\\
			&\hspace{5em}\leq \sigma_2(\| \y^\mathrm{f}_{k+N}-\y^\mathrm{r}_{k+N} \| ).
		\end{align*}
		Then, we obtain
		\begin{align}\label{eq:jbar}
			\begin{split}
			\bar J_{\by^\mathrm{r}}^{\star}(\x_{k},t_k) &\leq \bar V(\x_{k},t_k) + \sigma_2(\| \y^\mathrm{f}_{k+N}-\y_{k+N}^\rr \| ) \\
			&\leq \bar J_\r^\mathrm{i}(\x_{k},t_k) + \sigma_2(\| \y^\mathrm{f}_{k+N}-\y_{k+N}^\rr \| ) \\
			& \leq \bar V^\mathrm{i}(\x_{k},t_k) + \sigma(\| \y^\mathrm{f}_{k+N}-\y_{k+N}^\rr \| ),
			\end{split}
		\end{align}
		where we defined $\sigma:=\sigma_1+\sigma_2$. 
		
	Proceeding similarly as in the proof of Proposition~\ref{prop:stable}, we apply the control input $\ub[k]^\star$ from~\eqref{eq:nmpc}, i.e., $\y^\mathrm{f}=\r$, to obtain $$\x_{k+1}(\y^\mathrm{f})=\bar{f}_k(\x_k,\y^\mathrm{f}).$$
		In order to be able to apply this procedure, we first assume that the obtained initial guess is feasible for the ideal problem~\eqref{eq:ideal_nmpc} and proceed as follows. 
		By Assumption~3, we use the terminal control law {$\kappa_{\y^\r}^\mathrm{f}(\x,t)$}, to form a guess at the next time step and upper bound the \emph{ideal} rotated value function. By optimality
		\begin{align}\label{eq:iss_value_decrease}
			\bar{V}^\mathrm{i}&( \x_{k+1}(\y^\mathrm{f}),t_{k+1}) \leq{}\\ &\nonumber\leq{}\sum_{n=k+1}^{N-1}\bar{q}_\r(\xb^\star,\ub^\star,t_n)+\bar q_\r(\xb[k+N]^\star,\kappa_{\y^\r},t_{k+N})\\
			&\nonumber\hspace{4em}+\bar p_{\by^\r}(\xb[k+N+1]^{\kappa,\star},t_{k+N+1})\\
			&\nonumber=\bar{J}_{\by^\r}^\star(\x_k,t_k)-\bar{q}_\r(\xb[k]^\star,\ub[k]^\star,t_k)-\bar{p}_{\by^\r}(\xb[k+N]^\star,t_{k+N})\\
			&\nonumber+\bar{p}_{\by^\r}(\xb[k+N+1]^{\star,\kappa},t_{k+N+1})+\bar{q}_{\r}(\xb[k+N]^\star,\kappa_{\y^\r},t_{k+N}),
		\end{align}
		where we used
		$$\xb[k+N+1]^{\star,\kappa}\hspace{-0.2em}:= \hspace{-0.2em}f_{k+N}(\xb[k+N],\kappa_{\y^\r}),\, \kappa_{\y^\r}\hspace{-0.2em}:=\hspace{-0.2em}\kappa_{\y^\r}(\xb[k+N]^\star,t_{k+N}),$$
		{and assumed that $\xb[k+N+1]^{\star,\kappa}\in\mathcal{X}_{\y^\r}(t_{k+N+1})$}. Again, using Assumption~3 we can now upper bound the terms
		\begin{align*} \bar{p}_{\by^\r}(\xb[k+N+1]^{\star,\kappa},t_{k+N+1})-\bar{p}_{\by^\r}(\xb[k+N]^\star,t_{k+N})\\\
		+\bar{q}_\r(\xb[k+N]^\star,\kappa_{\y^\r})\leq{}0,
		\end{align*}
		so that~\eqref{eq:iss_value_decrease} can be written as
		\begin{align}
			\bar{V}^\mathrm{i}(\x_{k+1}(\y^\mathrm{f}),t_{k+1}) &\leq{}J_{\by^\r}^\star(\x_k,t_k)-\bar{q}_\r(\xb[k]^\star,\ub[k]^\star,t_{k}),\\
			\bar{V}^\mathrm{i}(\x_{k+1}(\y^\mathrm{f})) &\leq{}J_{\by^\r}^\star(\x_k,t_k)-\alpha_3(\|\x_k-\x^\r_k\|),\label{eq:iss_bound_decr}
		\end{align}
		which, in turn, proves~\eqref{eq:iss_decrease}.

		In case ${\xb[k+N+1]^{\star,\kappa}\not\in\mathcal{X}^\mathrm{f}_{\y^\mathrm{r}}(t_{k+N+1})}$, 
		we resort to a relaxation of the terminal constraint with an exact penalty~\cite{Scokaert1999a,Fletcher1987} in order to compute an upper bound to the cost. This relaxation has the property that  the solution of the relaxed formulation coincides with the one of the non-relaxed formulation whenever it exists. Then, by construction, the cost of an infeasible trajectory is higher than that of the feasible solution.  
		
		{Finally, from Assumption~\ref{a:approx_feas} we know that the value functions $\bar{V}(\x_{k+1}(\y^\mathrm{f}),t_{k+1})$ and $\bar V^\mathrm{i}(\x_{k+1}(\y^\mathrm{f}),t_{k+1})$ are feasible and bounded for time $k+1$.}
	\end{proof}

\begin{figure*}[ht]
	\centering
	\includegraphics[width=\linewidth]{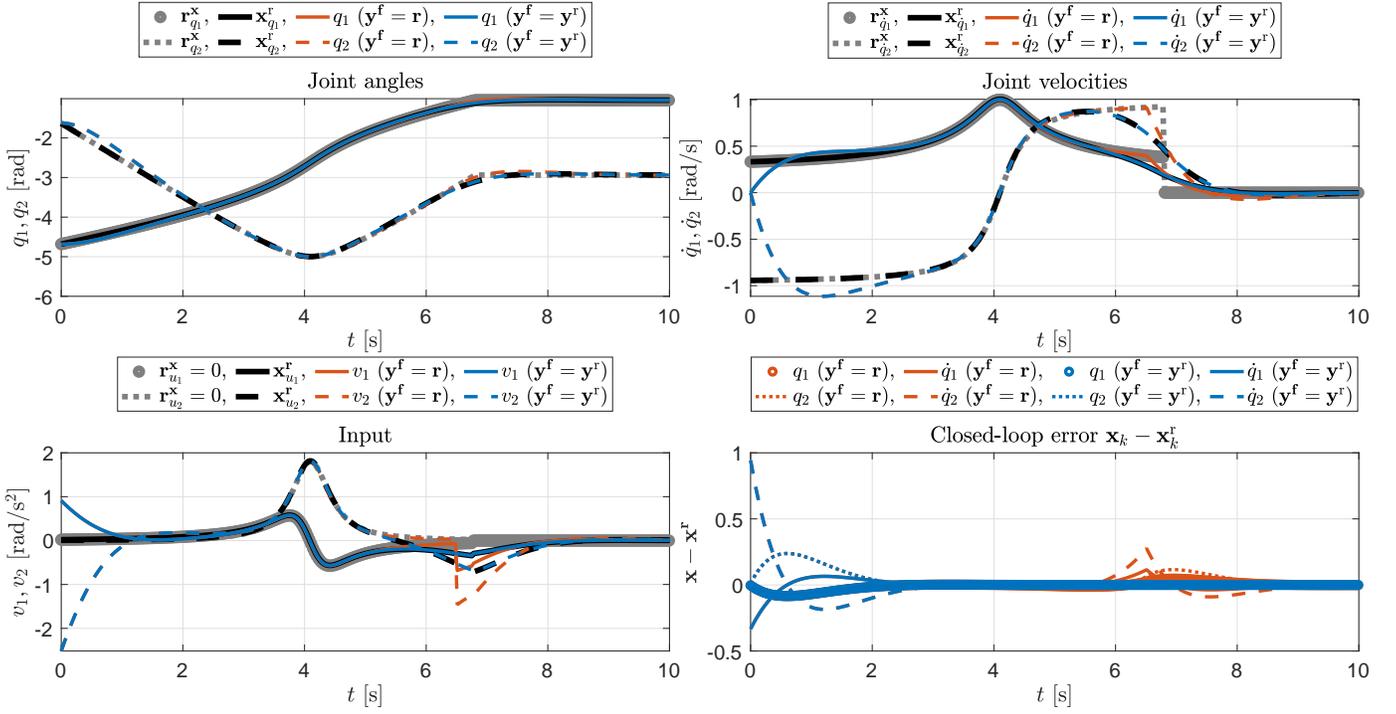}  
	\caption{Closed-loop simulation with initial condition $(x_1,x_2)=(-4.69,-1.62,0,0)$ and initial time $k=167$. The gray trajectories show the infeasible reference $\r=(\r^\x,\r^\u)$, while the black trajectories show the optimal reference $\y^\rr=(\x^\rr,\u^\rr)$ obtained from Problem~\eqref{eq:ocp}. The orange trajectories show the closed-loop behavior for the practical MPC Problem~\eqref{eq:nmpc}, while the blue trajectories show the closed-loop behavior for the \emph{ideal} MPC Problem~\eqref{eq:ideal_nmpc}.}
	\label{fig:mpatc_1_states}
\end{figure*}

	This theorem proves that one can use an infeasible reference, at the price of not converging exactly to the (unknown) optimal trajectory from OCP~\eqref{eq:ocp}, with an inaccuracy which depends on how inaccurate the terminal reference is. It is important to remark that, as proven in~\cite{Zanon2018a,Faulwasser2018}, since the MPC formulation has a turnpike, the effect of the terminal condition on the closed-loop trajectory is decreasing as the prediction horizon increases. 
	\begin{Remark}
		We note that similar results may be possible to prove for general nonlinear systems if there exists a storage function such that strict dissipativity holds for the rotated cost functions~\cite{muller2014necessity}. Future research will investigate ways to extend the results of Theorems~1 and~2 for general nonlinear systems.
	\end{Remark}

	\section{Simulations}\label{sec:simulations}
	In this section we implement the robotic example in~\cite{Faulwasser2009} to illustrate the results of Theorems~\ref{thm:as_stab_0} and \ref{thm:iss}. We will use the quadratic stage and terminal costs in \eqref{eq:stage_cost}-\eqref{eq:terminal_cost}, i.e.,
	\begin{gather*}
		q_\r(\xb,\ub,t_n) := \matr{c}{\xb-\rx_n\\\ub-\ru_n}^\top{}W\matr{c}{\xb-\rx_n\\\ub-\ru_n},\\
		p_\r(\xb,t_{n}) := (\xb-\rx_{n})^\top{}P(\xb-\rx_{n}).
	\end{gather*}

	We consider the system presented in~\cite{Faulwasser2009}, i.e., an actuated planar robot with two degrees of freedom with dynamics
	\begin{align}
		\matr{c}{\dot{x}_1\\\dot{x}_2} &= \matr{c}{ x_2\\B^{-1}(x_1)(u-C(x_1,x_2)x_2-g(x_1))},\label{eq:robot}
	\end{align}
	where $x_1=(q_1,q_2)$ are the joint angles, $x_2=(\dot{q}_1,\dot{q}_2)$ the joint velocities, and $B$, $C$, and $g$ are given by
	\begin{subequations}\label{eq:modelparams}
		\begin{align*}
			B(x_1) &:= \matr{cc}{200+50\cos(q_2) & 23.5+25\cos(q_2)\\ 
				23.5+25\cos(q_2) & 122.5},\\
			C(x_1,x_2) &:= 25\sin(q_2)\matr{cc}{\dot{q}_1 & \dot{q}_1+\dot{q}_2\\
				-\dot{q}_1	&  0}\\
			g(x_1) &:= \matr{c}{784.8\cos(q_1)+245.3\cos(q_1+q_2)\\
				245.3\cos(q_1+q_2)},
		\end{align*}
	\end{subequations}
	and with following constraints on the state and control 
	\begin{align}\label{eq:box_constr}
		\|x_2\|_\infty\leq{}3/2\pi, && \|u\|_\infty\leq{}4000.
	\end{align}
	By transforming the control input as
	$$u = C(x_1,x_2)x_2+g(x_1)+B(x_1)v,$$
	system~\eqref{eq:robot} can be rewritten into a linear system
	\begin{align}
		\matr{c}{\dot{x}_1\\\dot{x}_2} &= \matr{c}{ x_2\\v},\label{eq:robot_linear}
	\end{align}
	subject to the non-linear input constraint
	\begin{equation}
		\|C(x_1,x_2)x_2+g(x_1)+B(x_1)v\|_\infty\leq{}4000.
	\end{equation}

	Similar to~\cite{Faulwasser2009}, we use
	\begin{equation}\label{eq:path}
		p(\theta)=\left (\theta-\frac{\pi}{3},\,5\sin\left (0.6 \left (\theta-\frac{\pi}{3}\right )\right )\right ),
	\end{equation}
	with $\theta\in[-5.3,0]$ as the desired path to be tracked, and define the timing law, with $t_0=0\ \mathrm{s}$, to be given by
	\begin{align*}
		\theta(t_0) = -5.3,\, \dot{\theta}(t) = \frac{v_\mathrm{ref}(t) }{\left \| \nabla_\theta \rho(\theta(t))\right \|_2},\,  v_\mathrm{ref}(t) =\left \{ 
		\begin{array}{@{}ll@{}}
			1 & \hspace{-0.5em}\text{if } \theta<0\\
			0 & \hspace{-0.5em}\text{if }\theta\geq{}0
		\end{array}
		\right . .
	\end{align*}
	This predefined path evolution implies that the norm of the reference trajectory for the joint velocities will be  $1\ \mathrm{rad/s}$ for all $\theta<0$ and zero at the end of the path. Hence, we use the following reference trajectories
	\begin{align*}
		\r^\x(t) &= \matr{cc}{p(\theta(t))  &\frac{\partial{p}}{\partial\theta}\dot{\theta}(t)}^\top\hspace{-0.3em},\ 
		\r^\u(t) = \matr{c}{ \frac{\partial^2 p}{\partial\theta^2}\dot{\theta}^2+\frac{\partial p}{\partial \theta}\ddot{\theta}}^\top\hspace{-0.3em},
	\end{align*}
	which have a discontinuity at $\theta=0$.
	
	For the stage cost we use $W = \mathrm{blockdiag}(Q,R)$ with
	\begin{align*}
		Q=\mathrm{diag}(10,10,1,1),\
		R=\mathrm{diag}(1,1).
	\end{align*}
	The terminal cost matrix is computed using an LQR controller with the cost defined by $Q$ and $R$ and is given by
	$$ P = \matr{cc}{290.34\cdot{}\mathbf{1}^2 &105.42\cdot{}\mathbf{1}^2\\105.42\cdot{}\mathbf{1}^2&90.74\cdot{}\mathbf{1}^2}\in\mathbb{R}^4,$$
	where $\mathbf{1}^2\in\mathbb{R}^{2\times2}$ is an identity matrix. Consequently, the corresponding terminal set is then given by
	\begin{equation*}
		\mathcal{X}^\mathrm{f}_\r(t_n) =\{ \x\, |\, (\x-\r^\x_n)^\top P(\x-\r^\x_n) \leq{} 61.39\}.
	\end{equation*}
	For detailed derivations of the terminal cost and terminal set, we refer the reader to the Appendix in~\cite{Faulwasser2016,batkovic2020safe}.
	
	In order to obtain the feasible reference $\y^\rr=(\x^\rr,\u^\rr)$, we approximate the infinite horizon Problem~\eqref{eq:ocp} with a prediction horizon of $M=1200$ and sampling time $t_\mathrm{s}=0.03\ \mathrm{s}$. For the closed-loop simulations, we use the control input obtained from formulations~\eqref{eq:nmpc} and~\eqref{eq:ideal_nmpc} with horizon $N=10$ and sampling time $t_\mathrm{s}= 0.03\ \mathrm{s}$. Note that we used the linear system~\eqref{eq:robot_linear} with its corresponding state and input constraints for all problem formulations. Furthermore, all simulations ran on a laptop computer (i5 2GHz, 16GB RAM) and were implemented in Matlab using the CasADi~\cite{Andersson2019} software together with the IPOPT~\cite{wachter2006implementation} solver.
	
	Figure \ref{fig:mpatc_1_states} shows the closed-loop trajectories for the initial condition $(x_1,x_2)=(-4.69,-1.62,0,0)$ and initial time $k=167$. The gray lines denote the infeasible reference $\r=(\r^\x,\r^\u)$ for each state while the black lines denote the optimal reference $\y^\rr=(\x^\rr,\u^\rr)$ from~\eqref{eq:ocp}. The orange lines show the closed-loop evolution for the practical MPC Problem~\eqref{eq:nmpc}, i.e., when the terminal conditions are based on the infeasible reference $\y^\mathrm{f}=\r$. The blue lines instead show the closed-loop evolution for the \emph{ideal} MPC Problem~\eqref{eq:ideal_nmpc}, where the terminal conditions are based on the optimal reference from Problem~\eqref{eq:ocp}, i.e, $\y^\mathrm{f}=\y^\rr$. The bottom right plot of Figure~\ref{fig:mpatc_1_states} shows that the closed-loop error for both the practical MPC (orange lines) and \emph{ideal} MPC (blue lines) stabilize towards the reference $\r$ for times $t\leq{}5\mathrm{s}$. Between $5\ \mathrm{s}\leq{}t\leq{}9\ \mathrm{s}$, we can see that the discontinuity of the reference trajectory $\r$ affects how the two formulations behave. The \emph{ideal} formulation manages to track the optimal reference $\y^\rr$ (black trajectory), while the practical formulation instead tries to track the infeasible reference $\r$ and therefore deviates compared to the \emph{ideal} formulation. After the discontinuity, the rest of the reference trajectory is feasible and both formulations asymptotically stable.

	\section{Conclusions}\label{sec:conclusions}
	The use of infeasible references in MPC formulations is of great interest due to its convenience and simplicity. In this paper, we have discussed how such references affect the tracking performance for MPC formulations. We have proved that MPC formulations can yield asymptotic stability to an optimal trajectory when terminal conditions are suitably chosen. In addition, we also proved that the stability results can be extend for sub-optimal terminal conditions, in which case the controlled system is stabilized around a neighborhood of the optimal trajectory. Future research will investigate ways to extend the stability results to general nonlinear systems.

	\bibliographystyle{IEEEtran}
	\bibliography{references}
	
\end{document}